\title{Linear Equations with Ordered Data}
\author{Piotr Hofman}
\affiliation{University of Warsaw}
\email{piotrek.hofman@gmail.com}
\author{S\l awomir Lasota}
\affiliation{University of Warsaw}
\email{sl@mimuw.edu.pl}
\subjclass{F.1.1 Models of Computation}% mandatory: Please choose ACM 1998 classifications from http://www.acm.org/about/class/ccs98-html . E.g., cite as "F.1.1 Models of Computation". 
\keywords{Linear equations, Petri nets, Petri nets with data, vector addition systems, sets with atoms, orbit-finite sets}% mandatory: Please provide 1-5 keywords
\begin{document}

\begin{abstract}
Following a recently considered generalization of linear equations to unordered data vectors, 
we perform a further generalization to \emph{ordered data} vectors.
These generalized equations naturally appear in the analysis of vector addition systems (or Petri nets) extended with ordered data.
We show that nonnegative-integer solvability of linear equations is computationally equivalent 
(up to an exponential blowup) with the
reachability problem for (plain) vector addition systems.
This high complexity is surprising, and contrasts with NP-completeness for unordered data vectors.
Also surprisingly, we achieve polynomial time complexity of the solvability problem when the
nonnegative-integer restriction on solutions is dropped.
\end{abstract}

\maketitle

\begin{acks}
This work is supported by the 
\grantsponsor{PL00000}{Polish National Science Centre}{https://www.ncn.gov.pl/?language=en} grants
% No.: 
\grantnum{PL00000}{2016/21/\linebreak[2]B/ST6/\linebreak[2]01505} (S.L.) and 
\grantnum[https://www.mimuw.edu.pl/~ph209519/LinAlgProj/]{PL00000}{
2016/21/\linebreak[2]D/ST6/01368} (P.H.).
\end{acks}

% !TEX root = main.tex

\section{Introduction}

Systems of linear equations are useful for approximate analysis of vector addition systems, or Petri nets.
For instance, the relaxation of semantics of Petri nets, where the configurations along a run are not required 
to be nonnegative, yields so called \emph{state equation} of a Petri net, which is a system of linear equations
with nonnegative-integer restriction on solutions. 
This is equivalent to \emph{integer linear programming}, a well-known 
\NP-complete problem~\cite{Karp21NP-complete-Problems}.
If the nonnegative-integer restriction if further relaxed to
nonnegative-rational one (or nonnegative-real one), we get a weaker but computationally more tractable approximation, equivalent to
\emph{linear programming} and solvable in polynomial time.
We refer to~\cite{SilvaTC96} for an exhaustive overview of linear-algebraic and integer-linear-programming techniques 
in analysis of Petri nets; usefulness of these techniques is confirmed by multiple applications including, for instance, recently proposed
efficient tools for the coverability problem of Petri 
nets~\cite{GeffroyLS16,BlondinFHH16}.

\para{Motivations}
A starting point for this paper is an extension of the model of Petri nets, or vector addition systems,
with data~\cite{LNORW08,HLLLST2016}.
This is a powerful extension of the model, which significantly enhances its expressibility but also increases the complexity of analysis.
In case of \emph{unordered} data (a countable set of data values that can be tested for equality only), 
the coverability problem is decidable (in non-elementary complexity)~\cite{LNORW08}
but the decidability status of the reachability problem remains still open.
In case of \emph{ordered} data, the coverability problem is still decidable while reachability is undecidable. 
(Petri nets with ordered data are equivalent to timed Petri nets, as shown in~\cite{rr-lsv-10-23}.)
One can also consider other data domains, and the coverability problem remains decidable as long as the 
data domain is homogeneous~\cite{Las16} (not to be confused with \emph{homogeneous} systems of linear equations), 
but always in non-elementary complexity.
In view of these high complexities, a natural need arises for efficient over-approximations.

A configuration of a Petri net with data domain $\D$ is a nonnegative integer
\emph{data vector}, i.e., a function $\D \to \N^d$ 
that maps only finitely many data values to a non-zero vector in $\N^d$. In a search for efficient over-approximations of
Petri nets with data, a natural question appears: Can linear algebra techniques 
be generalised so that the role of vectors is played by data vectors?
In case of unordered data, this question was addressed 
in~\cite{HLT2017LICS}, where first promising results has been shown, 
namely the 
nonnegative-integer solvability of linear equations over unordered data domain is \NP-complete.
Thus, for unordered data, the problem remains within the same complexity class as its plain (data-less) counterpart.
The same question for the second most natural data domain, i.e.~ordered data, 
seems to be even more important; ordered data enables modeling features 
like fresh names creation or time dependencies.

\para{Contributions}
In this paper we do a further step and investigate linear equations with ordered data, for which
we fully characterise the complexity of the solvability problem.
Firstly, we show that nonnegative-integer solvability of linear equations is computationally equivalent 
(up to an exponential blowup) with the reachability problem for plain Petri nets (or vector addition systems).
This high complexity is surprising, and contrasts with NP-completeness for unordered data vectors.
Secondly, we prove, also surprisingly, that the complexity of the solvability problem drops back to polynomial time, when
the nonnegative-integer restriction on solutions is relaxed to 
nonnegative-rational, integer, or rational.
Thirdly, we offer a conceptual contribution and notice that
systems of linear equations with (unordered or) ordered data are a special case of 
systems of linear equations which are infinite but finite up to an automorphism of data domain.
This recalls the setting of \emph{sets with atoms}~\cite{atombook,TMatoms,locfin}, with a data domain being a parameter, and 
the notion of \emph{orbit-finiteness} relaxing the classical notion of finiteness.

\para{Outline}
In Section~\ref{sec:vas} we introduce the setting we work in, and formulate our results.
Then the rest of the paper is devoted to proofs.
First, in Section~\ref{sec:lowerbound} we provide a lower bound for the nonnegative-integer solvability problem,
by a reduction from the VAS reachability problem.
Then, in Section~\ref{sec:hist} we suitably reformulate our problem 
in terms \emph{multihistograms}, which are matrices satisfying certain combinatorial property.
This reformulation is used In the next Section~\ref{sec:upperbound} to provide 
a reduction from the nonnegative-integer solvability problem to the reachability problem of vector addition systems, 
thus proving decidability of our problem.
%Then in Section~\ref{sec:lowerbound} we provide a reduction in the opposite direction, thus obtaining a lower bound for our problem.
Finally, in Section~\ref{sec:p} we investigate various relaxations of the 
nonnegative-integer restriction on solutions and work out 
a polynomial-time decision procedure in each case.
In the concluding Section~\ref{sec:orbitfinite} we sketch upon a generalised setting of orbit-finite systems of linear equations.

%Linear algebraic and linear programming techniques for the analysis of place/transition net systems~\cite{SilvaTC96}
%
%Approaching the Coverability Problem Continuously:
%we develop a new approach to this problem which is primarily based on applying forward coverability in continuous Petri nets as a pruning criterion inside a backward-coverability framework~\cite{BlondinFHH16}
%
%our tool is based on two classical methods: the state equation for Petri nets, and data-flow sign analysis~\cite{GeffroyLS16}

% !TEX root = main.tex

\section{Vector addition systems and linear equations} \label{sec:vas}
%\section{Introduction}

In this section we introduce the setting of linear equations with data, and formulate our results. 
For a gentle introduction of the setting, we start by recalling classical linear equations.

Let $\Q$ denote the set of rationals, and $\Qplus, \Z$, and $\N$ denote the subsets
of nonnegative rationals, integers, and nonnegative integers.
%further by $\R$ we denote any commutative ring.
%
Classical linear equations are of the form
\[
a_1 x_1 + \ldots a_m x_m = a,
\] 
where $x_1 \ldots x_m$ are variables (unknowns), and $a_1 \ldots a_m\in \Q$ are rational coefficients.
For a system $\calU$ of such equations over the same variables $x_1, \ldots, x_m$, a solution of ${\calU}$
is a vector $(n_1, \ldots, n_m) \in \Q^m$ such that the valuation 
$x_1 \mapsto n_1, \ldots$, $x_m \mapsto n_m$ satisfies all equations in $\calU$.
In the sequel we are most often interested in nonnegative integer solutions $(n_1, \ldots, n_m) \in \N^m$,
%($\N$-solutions), 
but one may consider also other solution domains than $\N$.
%For $\X\subseteq\R$, by $\sol{\X}{{\calU}}\subseteq{\X}^k$ we mean the set of all solutions of the system $\calU$ belonging to $\X$.
It is well known that the \emph{nonnegative-integer solvability problem} ($\N$-solvability problem) 
of linear equations, i.e.~the question whether 
${\calU}$ has a nonnegative-integer solution, is NP-complete~\cite{taming}.
The complexity remains the same for other natural variants of this problem, for instance for inequalities instead of equations
(a.k.a.~integer linear programming).
On the other hand, for any $\X \in \{\Z, \Q, \Qplus\}$, the 
\emph{$\X$-solvability problem}, i.e., the
question whether $\calU$ has a solution $(n_1, \ldots, n_m) \in \X^m$,
is decidable in polynomial time.
%
%if non-negative rational solutions $(n_1, \ldots, n_k) \in (\Qplus)^k$ are allowed, 
%the solvability problem is decidable in polynomial time;
%the same happens for relaxed versions where integer solutions $(n_1, \ldots, n_k) \in \Z^k$,
%or rational solutions $(n_1, \ldots, n_k) \in \Q^k$ are allowed.
%
%In case of any risk of confusion, we explicitly determine the domain of solutions and speak of $\N$-solvability, $\Qplus$-solvability, etc.

\begin{remark}[integer coefficients] \label{rem:int}
A system of linear equations with rational coefficient can be transformed in polynomial time 
to a system of linear equations with integer coefficients, while preserving the set of solutions.
Thus from now on we allow only  for \emph{integer} coefficients $a_1 \ldots a_m$ in linear equations.
\end{remark}

The $\X$-solvability problem is equivalently formulated as follows: for a given finite set of coefficient vectors 
$A = \{\vec a_1, \ldots, \vec a_m\} \finsubseteq \Z^\dimension$ and a target vector $\vec a \in \Z^\dimension$ 
 (we use bold fonts to distinguish vectors from other elements), check whether 
$\vec a$ is an \emph{$\X$-sum} of $A$, % (with possible repetitions) of vectors from $A$, i.e., whether
i.e., a sum of the following form 
%{\malenkie
\begin{align} \label{eq:sum}
\vec{a}=\sum_{i=1}^{m} c_i \cdot \vec a_i, 
\end{align}
%}
for some $c_1, \ldots, c_m \in \X$.
% for some sequence $\vec a_1, \ldots, \vec a_m \in A$ of not necessarily pairwise different elements of $A$. 
The number $d$ corresponds to the number of equations in $\calU$ and is called the \emph{dimension} of $\calU$.

%It is relevant for this paper that 
Linear equations may serve as an over-approximation of 
the reachability set of a Petri net, or equivalently, of a \emph{vector addition system} -- we prefer to work with the latter model. 
A vector addition system (VAS) 
${\cal V} = (A, \vec{i}, \vec{f})$ is defined, similarly as above, by a finite set of vectors
$A \finsubseteq \Z^\dimension$ together with two nonnegative vectors $\vec{i}, \vec{f} \in \N^\dimension$, the initial one and the final one.
The set $A$ determines a transition relation $\longrightarrow$ between configurations, which are nonnegative integer vectors $\vec{c}\in \N^\dimension$:
there is a transition
$\vec{c} \longrightarrow \vec{c}'$ if $\vec{c}' = \vec{c} + \vec{a}$ for some $\vec{a}\in A$.
The VAS reachability problems asks, whether the final configuration is reachable from the initial one by a sequence of transitions,
i.e. $\vec{i} \longrightarrow^* \vec{f}$. 
It is important to stress that intermediate configurations are required to be nonnegative.
In other words, the reachability problem asks whether there is a sequence $\vec{a}_1, \vec{a}_2, \ldots, \vec{a}_m \in A$ %of vectors 
(called a \emph{run}) such that 
%
%{\malenkie
\begin{align*}
\vec{i} + \sum_{i = 1}^m \vec{a}_i = \vec{f} && \qquad
\vec{i} + \sum_{i = 1}^j \vec{a}_i \geq \vec{0}, \text{ for every } j \in \{1\dots m\}
\end{align*}
%}
\noindent
where $\zerovec$ denotes a zero vector (its length will be always clear from the context).
The problem is decidable~\cite{mayr81,kosaraju82} and \expspace-hard~\cite{Lipton76},
and nothing is known about complexity except for the cubic Ackermann upper bound of~\cite{demystifying}.
For a given VAS, a necessary condition for reachability is that
%{\malenkie
$
\vec{f} - \vec{i} =\sum_{i=1}^{m} \vec{a_i},
$
%}
%for some non-negative integers $n_1, \ldots, n_k \in \N$.
which is equivalent to $\N$-solvability of a system of linear equations, called (in case of Petri nets) 
the \emph{state equation}.
For further details we refer the reader to an exhaustive overview of linear-algebraic approximations for
Petri nets~\cite{SilvaTC96}, where both $\N$- and $\Qplus$-solvability problems are considered.

%!TEX root = main.tex              

\subsection{Vector addition systems and linear equations, with ordered data}
\label[section]{sec:datavectors}

The model of VAS, and linear equations, can be naturally extended with data.
In this paper we assume that the data domain $\setD$ is a countable set,
ordered by a dense total order $\leq$ with no minimal nor maximal element.
Thus, up to isomorphism, $(\setD, \leq)$ is rational numbers with the natural ordering.
Elements of $\setD$ we call \emph{data values}.
In the sequel we use order preserving permutations (called \emph{data permutations} in short) of $\setD$, 
i.e.~bijections $\rho : \setD\to\setD$ such that $x\leq y$ implies $\rho(x)\leq \rho(y)$.

A \emph{data vector} is a function $\vec{v}:\setD\rightarrow \Q^\dimension$
such that the \emph{support}, i.e.~the set
$\support{\vec{v}}\eqdef \{\alpha\in \setD \mid \vec{v}(\alpha)\not=\zerovec\}$,
is finite (similarly as for vectors, we use bold fonts to distinguish data vectors from other elements).
The vector addition $+$ is lifted to data vectors pointwise, so that
$(\vec{v}+\vec{w})(\alpha) \eqdef \vec{v}(\alpha)+\vec{w}(\alpha)$.
A data vector $\vec{v}$ is \emph{nonnegative} if $\vec{v} : \setD\to(\Qplus)^\dimension$, 
and $\vec v$ is \emph{integer} if $\vec{v} : \setD\to\Z^\dimension$.

Writing $\circ$ for function composition, we see that $\vec{v}\circ\rho$ is a data vector for any data vector 
$\vec{v}$ and any order preserving data permutation $\rho:\setD\to\setD$.
For a set $\setV$ of data vectors we define
%{\small
\[
\perm{\setV} = \setof{\vec v \circ \rho}{\vec v \in \setV, \rho \text{ a data permutation}}.
\]
%}
A data vector $\target$ is said to be a \emph{\pproduct} of a finite set of data vectors $\setV$
if there are $\vec{v_1},\ldots,\vec{v_m} \in \perm{\setV} $, not necessarily pairwise different, such that
%{\small
$
\target=\sum_{i=1}^{m} \vec{v}_i. % \circ\rho_{i}.
$
%}
%
In the generalisations of the classical solvability problem, to be defined now, 
we allow as input only integer data vectors (cf.~Remark~\ref{rem:int}):
%We mostly focus on the following decision problem (cf.~\eqref{eq:sum}): 

\decproblem{\pproductprob}
{a finite set $\setV$ of integer data vectors and an integer data vector $\target$}
{is $\target$ a \pproduct\ of $\setV$?}
      
\noindent
%\begin{remark}
In the special case when the supports of $\target$ and all vectors in $\setV$ are all singletons, the \pproductprob is just
$\N$-solvability of linear equations and thus the \pproductprob is trivially \NP-hard.
%\end{remark}
As the first main result, we prove the following inter-re\-du\-ci\-bi\-li\-ty:
\begin{theorem}\label[theorem] {thm:eqVASreach}
The \pproductprob and the VAS reachability problem are inter-reducible, with 
%polynomial blowup in the lower bound reduction and 
an exponential blowup. % in the upper bound one.
%\todo{probably we can do it without exp in on direction}
\end{theorem}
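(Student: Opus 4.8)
The plan is to prove the two reductions separately, since \emph{inter-reducibility} means establishing both that the \pproductprob reduces to VAS reachability and that VAS reachability reduces to the \pproductprob.

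For the direction from VAS reachability to the \pproductprob (the lower bound of Section~\ref{sec:lowerbound}), I would start from a VAS $\mathcal{V} = (A, \vec{i}, \vec{f})$ and encode an entire run as a single data-vector equation $\target = \sum_i \vec v_i$. The guiding idea is to use data values as \emph{time stamps}: a solution places each applied transition at a distinct data value, and the total order on $\setD$ then reproduces the temporal order of the run. Encoding the \emph{effect} of transitions is routine — it only has to force the global displacement to equal $\vec f - \vec i$, which is essentially the classical state equation. The genuinely new ingredient is encoding \emph{nonnegativity of the partial sums}. Here I would use the flow characterization: a run stays nonnegative in coordinate $j$ exactly when every unit consumed in coordinate $j$ can be matched to a distinct unit produced no later, with the initial marking $\vec i$ supplying productions at the least positions. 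To realize such a matching inside the equation I would add auxiliary \emph{pairing} vectors supported on two data values $\alpha < \beta$ that transfer one unit of a coordinate from $\alpha$ to $\beta$; order-preserving permutations generate precisely the order-respecting pairs, so cancellation in $\target = \sum_i \vec v_i$ can only happen along matchings that respect time. Constraining these pairing vectors to fire only in the bundles dictated by the transition vectors of $A$ then makes solvability of the equation equivalent to existence of a nonnegative run.

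For the converse direction, from the \pproductprob to VAS reachability (the upper bound of Sections~\ref{sec:hist} and~\ref{sec:upperbound}), I would first pass through the \emph{multihistogram} reformulation. The observation is that a candidate solution $\target = \sum_i \vec v_i$ is determined, up to an order-automorphism of $\setD$, by how the chosen permuted copies overlap on data values; collapsing this overlap data into a matrix yields a multihistogram, whose realizability is a purely combinatorial property. I would then construct a VAS that builds a valid multihistogram by scanning data positions in increasing order: a configuration of the VAS records the running column sums together with the count of partially placed vectors still awaiting completion at larger positions, each transition appends the next row of the histogram, and the final configuration checks that the accumulated sum equals $\target$ and that no vector is left dangling. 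Reachability of this final configuration then coincides with the existence of a \pproduct.

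The main obstacle will be the upper-bound direction, and more precisely proving that combinatorial realizability of a multihistogram is captured \emph{exactly} by reachability in the constructed VAS. One must argue both soundness — every genuine \pproduct yields a reachable run — and completeness — every reachable run can be instantiated by actual order-preserving copies over the infinite domain $\setD$; the latter is delicate because the VAS only tracks bounded summary information (column sums and counts of open vectors) whereas the realizing permutations live in an infinite order. Finally, the exponential blowup enters, in both directions, through the translation between support sizes and binary-encoded coefficients on the one hand and the dimension and number of transitions of the VAS on the other; I expect the bookkeeping of open (not-yet-completed) vectors to be the delicate combinatorial core of the whole argument.
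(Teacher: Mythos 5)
Your two-directional architecture matches the paper's, and your time-stamp intuition for the lower bound is the right one, but the mechanism you describe would fail. First, the \pproductprob gives you no way of ``constraining these pairing vectors to fire only in the bundles dictated by the transition vectors'': $\setV$ is just a set of data vectors, each usable independently, any number of times, under arbitrary order-preserving permutations, so freely available transfer vectors plus single-datum transition vectors admit solutions with no corresponding run. Second, your non-strict matching criterion (``produced no later'') breaks at equal data values: take $A=\{(1,-1),(-1,2)\}$, $\vec i=(0,0)$, $\vec f=(0,1)$; placing both transitions at the \emph{same} datum balances the equation (the state equation has the solution $n_1=n_2=1$), yet no nonnegative run exists since neither transition can fire from $\vec i$. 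The paper repairs both defects with one device: each bundle is folded into a single vector of $\setV$, a \emph{data realization} $\vec v={\vec s}^+-{\vec s}^-$ of $\vec a\in A$, where ${\vec s}^-$ is a data spread of $\vec a^-$, ${\vec s}^+$ of $\vec a^+$, and crucially $\support{{\vec s}^-}<\support{{\vec s}^+}$ \emph{strictly}. Strictness makes the ``consumes what was produced'' relation on the summands acyclic, hence a partial order that can be extended to a total order; and then, at each fixed datum and coordinate, the partial sums along the sorted sequence are first non-decreasing and then non-increasing, with nonnegative endpoints, which yields nonnegativity of all intermediate configurations. Your matching is thereby realized implicitly as cancellation at equal data values, not by separate pairing vectors; note also that the converse correctness argument (from an arbitrary solution back to a run) is exactly this sorting argument, which your sketch leaves entirely implicit.

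For the upper bound you correctly name the multihistogram reformulation and a column-scanning VAS (your ``appends the next row'' should be \emph{column}: data values index columns), but you omit the step the whole construction hinges on: the columns of a multihistogram are nonnegative-integer solutions of the systems $\setU_{\setM,\vec a}$ and are a priori \emph{unbounded}, so a VAS that ``reads'' columns as letters has no finite transition set. The paper first proves that any multihistogram can be transformed by \emph{smears} (replacing a column by two columns summing to it, simultaneously in all $H_\cokol$) into one with exponentially bounded entries, using the hybrid-linear representation $B+P^\oplus$ of solution sets with exponentially bounded $B,P$ (Lemma~\ref{lem:taming}); only after this does the exponential reading alphabet--and the exponential blowup--exist. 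Moreover, your ``count of partially placed vectors awaiting completion'' needs to be the precise profile invariant: the histogram condition is exactly nonnegativity of $\sum H(i,1\ldots j)-\sum H(i+1,1\ldots j+1)$, and since this couples column $j$ in row $i$ with column $j+1$ in row $i+1$, the paper maintains it with $2(r-1)$ counters split into a buffer half and a profile half, with extra ``moving'' transitions (Lemma~\ref{lem:profiles} and equation~\eqref{eq:inv}); nonnegativity of VAS counters is then precisely the histogram inequality, which is why the construction is exact and why the problem sits at VAS reachability rather than NP. By contrast, the completeness worry you single out--instantiating a run over the infinite ordered domain--is the easy part: one simply picks any $c$ data values for the $c$ columns (Lemma~\ref{lem:equiv}). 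So the genuinely delicate step your proposal is missing is the column-bounding argument, not the instantiation.
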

%
    
%\begin{remark}
Our setting generalises the setting of \emph{unordered} data, where the data domain $\setD$ is \emph{not} ordered, and
hence data permutations are all bijections $\setD\to\setD$. In the case of unordered data the \pproductprob 
%still subsumes $\N$-solvability of linear equations, and is hence \NP-hard; it is actually 
is \NP-complete, as shown in~\cite{HLT2017LICS}.
The increase of complexity caused by the order in data is thus remarkable.
%\end{remark}

Similarly as linear equations in the data-less setting, 
\pproductprob may be used as an overapproximation of the reachability in
vector addition systems with ordered data, which are defined 
exactly as ordinary VAS but in terms of data vectors instead of ordinary vectors. 
A VAS with ordered data ${\cal V} = (\setV, \vec{i}, \vec{f})$ consists of $\setV\finsubseteq 
\setD\to \Z^\dimension$ a finite set  of integer data vectors, and the initial and final
nonnegative integer data vectors $\vec{i}, \vec{f} \in \setD\to \N^d$. The configurations 
are nonnegative integer data vectors, and the set $\setV$ induces
a transition relation between configurations as follows: $\vec{c} \longrightarrow \vec{c'}$ if $\vec{c'} = \vec{c} + \vec{v}$ 
for some $\vec{v} \in \perm{\setV}$.
The reachability problem asks whether the final configuration is reachable from the initial one by a sequence
of transitions, $\vec{i} \longrightarrow^* \vec{f}$; it is undecidable~\cite{LNORW08}.
(The decidability status of the reachability problem for VAS with \emph{unordered} data is unknown.)
As long as reachability is concerned, % VAS are equivalent to Petri nets; similarly, 
VAS with (un)ordered data are equivalent to Petri nets with (un)ordered data~\cite{HLLLST2016}.

The \pproductprob is easily generalised to other domains $\X\subseteq\Q$ of 
solutions.
To this end we introduce scalar multiplication: %$\_ \cdot \_$: 
for $c\in \Q$ and a data vector $\vec{v}$ we put 
$(c\cdot \vec v)(\da)\eqdef c\vec{v}(\da)$.
A data vector $\target$ 
is said to be a \emph{\xpproduct{$\X$}} of a finite set of data vectors $\setV$ if there are
$\vec{v_1},\ldots,\vec{v_m} \in \perm{\setV}$, not necessarily pairwise different, 
and coefficients $c_1,c_2\ldots c_m \in \X$
% , and data permutations $\rho_1,\ldots,\rho_m$ of $\setD$ 
such that  (cf.~\eqref{eq:sum}) 
%{\small
\[
\target=\sum_{i=1}^{m} c_i \cdot \vec{v_i}. % \circ\rho_{i}.
\]
%}
%
This leads to the following version of {\sc \pproductprob} parametrised by the choice of solution domain $\X$:

\decproblem{\xpproductprob{$\X$}}
{a finite set $\setV$ of integer data vectors and an integer data vector $\target$}
{is $\target$ an \xpproduct{$\X$} of $\setV$?}
      
\noindent
The \pproductprob is a particular case, for $\X = \N$.
Our second main result is the following:
\begin{theorem}\label[theorem] {thm:p}
For any $\X \in \{\Z, \Q, \Qplus\}$, the \xpproductprob{$\X$} is in \ptime.
\end{theorem}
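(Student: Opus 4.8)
The plan is to reduce each \xpproductprob{$\X$} instance to a \emph{classical} $\X$-solvability instance of polynomial size, and then invoke the fact recalled in Section~\ref{sec:vas} that classical $\X$-solvability is in \ptime\ for every $\X\in\{\Z,\Q,\Qplus\}$. Throughout, fix the target support $\support{\target}=\{\beta_1<\dots<\beta_n\}$; it partitions the data domain into $2n+1$ \emph{regions}: the $n$ points $\beta_j$ themselves and the $n+1$ open intervals between and around them. Because data permutations are order preserving and $\setV$ is finite, the orbit $\perm{\setV}$ of each base vector is determined by the finite sequence of its support values read in increasing data order; hence a candidate decomposition $\target=\sum_i c_i\,\vec v_i$ is described, up to the choice of concrete data values, by (i) for each copy, which base vector it uses and how that vector's support positions distribute among the regions, and (ii) the coefficients $c_i\in\X$.

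First I would argue that only this combinatorial \emph{type} matters, not the actual data values chosen inside the open regions. At any data value lying in an open region the summed contribution must be $\zerovec$ (such values lie outside $\support{\target}$), whereas at $\beta_j$ it must equal $\target(\beta_j)$. Since the order is dense with no endpoints, two support positions of distinct copies assigned to the same open region may be freely placed at a common data value, letting their columns cancel, or at distinct fresh values; thus the achievable net effect of a region depends only on how positions are grouped there, not on where.

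The heart of the argument, and the step I expect to be the main obstacle, is to give a \emph{finite} generating description of the net contributions achievable inside an open region subject to the zero-sum constraint. A copy touching a region contributes an increasing block of (scaled) column vectors that must be distributed over finitely many common data values so that the vectors stacked at each value sum to $\zerovec$. I would prove a small-model property: whenever some assignment realises the required cancellation, one already does so using only polynomially many distinct auxiliary data values per region and only polynomially many distinct grouping patterns. This should follow from a pumping-and-merging argument, collapsing or deleting redundant auxiliary values that do not affect the target, together with a Carath\'eodory-type bound in the column space $\Z^\dimension$, yielding at most $\mathrm{poly}(\dimension,|\setV|,n)$ elementary patterns that $\X$-generate all admissible region contributions.

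Finally I would assemble a single classical system over $\X$ whose variables are the coefficients attached to the polynomially many types and elementary patterns, with one block of linear equations per target point $\beta_j$ enforcing $\sum_i c_i\,\vec v_i(\beta_j)=\target(\beta_j)$ and the zero-sum constraints already folded into the generators. This system has polynomial size and is $\X$-solvable exactly when $\target$ is an \xpproduct{$\X$} of $\setV$, so polynomial-time solvability of classical $\X$-equations concludes the proof. The three domains differ only in how the elementary patterns are computed and in the final solver invoked, namely a subspace or lattice computation for $\Q$ and $\Z$ and linear-programming feasibility for the cone $\Qplus$, but in each case both the reduction and the solver run in polynomial time.
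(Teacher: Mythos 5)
There is a genuine gap, and it sits exactly where you predicted it would: the ``small-model property'' for region contributions. First, a counting problem: a \emph{type} in your sense must record how the ordered support of a copy splits across the $2n+1$ regions, and the number of such weakly monotone assignments is of order $\binom{s+2n}{2n}$, hence exponential when the supports of the base vectors and of $\target$ both grow with the input --- so the variable set of your final classical system is not polynomial to begin with. Second, and more fundamentally, the admissible net contributions inside a region are not just ``nonnegative groupings of columns summing to $\zerovec$'': the support positions of each copy must be visited in increasing data order, which imposes prefix conditions coupling the auxiliary values \emph{in sequence} --- this is precisely the histogram condition around which the paper's Section~\ref{sec:hist} is built. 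For $\X=\Qplus$ the paper shows that this region behaviour is the reachability relation of a homogeneous linear Petri net, and that relation resists description by a plain linear system over $\Qplus$: its characterisation (\cref{lem:serge12lin} and \cref{lem:serge20lin}) requires strict-positivity side conditions of the form $x>0 \implies y>0$, which define non-closed convex sets, whereas any finite family of ``elementary patterns'' generating a cone yields a closed polyhedral set. This is why the paper must pass through linear equations \emph{with implications} and invoke the nontrivial polynomial-time algorithm of Fraca and Haddad (\cref{lem:serge-compl}) rather than ordinary LP feasibility; your Carath\'eodory-plus-pumping step would at best certify feasibility of a closed relaxation, which can hold while no rational multihistogram exists.

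A sanity check confirms the flaw: nothing in your sketch uses that $\X=\Qplus$ rather than $\N$ --- pumping, merging, and Carath\'eodory-type bounds for solution sets all have integer analogues --- yet for $\X=\N$ the very same region-and-grouping structure encodes VAS reachability (\cref{thm:eqVASreach}, proved in Section~\ref{sec:lowerbound}), so a domain-insensitive reduction to a polynomial-size classical solvability instance cannot be correct. Any valid argument for $\Qplus$ must exploit fractional splitting of coefficients in an essential way, as the paper's continuous-Petri-net-style analysis does. For $\X\in\{\Z,\Q\}$ your plan is workable but much heavier than necessary: since negatives are available, the paper (Lemma~\ref{lem:THM3}) cancels auxiliary data values using ``$\vec a$-moves'' and reduces directly to two ordinary systems --- one for the summed vector $\undata{\target}$, one for the components $\compon{\target}$ --- with no region analysis at all, and this works over an arbitrary commutative ring.
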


\noindent
For $\X\in \{\Z, \Q\}$, the above theorem is a direct consequence of a more general fact, where
$\Q$ or $\Z$ is replaced by any commutative ring $\R$, under a proviso that
data vectors are defined in a more general way, as finitely supported functions $\D \to \R^d$.
With this more general notion, we prove that 
the \xpproductprob{$\R$} reduces in polynomial time to the $\R$-solvability of linear equations with coefficients from $\R$
(cf.~Theorem~\ref{thm:pp} in Section~\ref{sec:pp}).

The case $\X=\Qplus$ in \Cref{thm:p} is more involved but of particular interest, as it recalls
continuous Petri nets \cite{serge, serge-compl} where fractional firing of a transition is allowed,
and leads to a similar elegant theory and efficient algorithms based on $\Qplus$-solvability of linear equations.
Moreover, faced with the high complexity of Theorem~\ref{thm:eqVASreach}, 
it is expected that Theorem~\ref{thm:p} may become a cornerstone of linear-algebraic techniques for VAS with ordered data.

% !TEX root = main.tex

\section{Lower bound for the \pproductprob} \label[section]{sec:lowerbound}
%\section{Reduction from VAS reachability}

In this section, all data vectors are silently assumed to be integer data vectors.
We are going to show a reduction from the VAS reachability problem to the \pproductprob.
Fix a VAS ${\cal A} = (A, {\vec i}, {\vec f})$. We are going to define a set of data vectors $\setV$ and a target data vector $\target$
such that the following conditions are equivalent:

\begin{enumerate}
\item $\vec f$ is reachable from $\vec i$ in $\cal A$;
\item $\target$ is a \pproduct\ of $\setV$.
\end{enumerate}

\noindent
W.l.o.g.~assume $\vec f = \zerovec$.

We need some auxiliary notation.
First, note that every integer vector ${\vec a} \in \Z^d$ is uniquely presented as a difference $\vec{a} = \vec{a}^+ - \vec{a}^-$ 
of two nonnegative vectors $a^+ \in \N^d$ and
$a^- \in N^d$ defined as follows:
\begin{align*}
a^+(i) = \begin{cases}
a(i), & \text{ if } a(i) \geq 0 \\
0, & \text{ if } a(i) < 0 
\end{cases}
&& \ 
a^-(i) = \begin{cases}
- a(i), & \text{ if } a(i) \leq 0 \\
0, & \text{ if } a(i) > 0 
\end{cases}
\end{align*}
For a nonnegative vector ${\vec a} \in\N^d$, by a \emph{data spread} of $\vec a$ we mean any nonnegative integer data vector 
$\vec v : \setD \to \N^d$ such that % for every $i \in \{1,\ldots, d\}$,
\[
\sum_{\da\in\support{\vec v}} {\vec v}(\da) = {\vec a}.
\]
In words, for every coordinate $i$, the value ${\vec a}(i)\geq 0$ is spread among all values ${\vec v}(\da)(i)$, for all data values $\da \in \setD$;
clearly, ${\vec v}$ is finitely supported.

%\slnote{w tym rozdziale przydalby sie 'running example' ilustrujacy kolejne definicje}

The rough idea of the reduction is to simulate every transition ${\vec a} \in A$ by a data spread of $\vec a$
such that, intuitively, all positive numbers in $\vec a$ use larger data values than all negative values.
%  vector obtained as a difference of a data spread of $\vec a^+$ and a 
%data spread of $\vec a^-$, under an assumption that every datum in 
%the support of the latter one is smaller than every datum in the support of the former one.
By a \emph{data realization} of a vector $\vec{a}\in A$ we mean any data vector of the form
$\vec{v} = {\vec s}^+ - {\vec s}^-$, where data vector ${\vec s}^-$ is a data spread of ${\vec a}^-$, 
data vector ${\vec s}^+$ is a data spread
of ${\vec a}^+$, and $\support{{\vec s}^-} < \support{{\vec s}^+}$
(with the meaning that every element of $\support{{\vec s}^-}$ is smaller than every element of $\support{{\vec s}^+}$).
Intuitively, the effect of ${\vec v}$ is like the effect of $\vec a$ but additionally data values involved are \emph{increased}.
We will shortly write $\supportplus{\vec v}$ for $\support{\vec{s}^+}$ and $\supportminus{\vec v}$ for $\support{\vec{s}^-}$.
Clearly, a non-zero vector $\vec{a}$ has infinitely many different data realizations; on the other hand, 
there are only finitely many of them \emph{up to data permutation}. 
Let $V_\vec{a}$ be a set of data realizations of $\vec a$ containing representatives 
up to data permutation. 
The cardinality of $V_\vec{a}$ is exponential with respect to the size of $\vec{a}$.

Now we are ready to define $\setV$ and $\target$: we put $\setV = \bigcup_{{\vec a}\in A} V_{\vec a}$, and
as the target vector $\target$ we take $\target = -\vec{\bar i}$, for some arbitrary data spread $\vec{\bar i}$ of
$\vec i$ (recall that $\vec{\bar f} = \zerovec$).
 
It remains to prove the equivalence of conditions 1.~and 2. First, 1.~easily implies 2.~as every run of $\cal A$ can be transformed into 
a \pproduct\ of $\setV$ that sums up to $\target$, using suitable data realisations of the vectors used in the run. 

For the converse implication, suppose that
$\target=\sum_{i=1}^{n} {\vec w}_i$, where $\vec{w}_i %\in \perm{V}$. 
  = \vec{v}_i \circ\theta_{i}$ and $\vec{v}_i \in \setV$.
By construction of $\setV$, for every $i \leq n$ the data vector $\vec{v}_i$ belongs to $V_{\vec{a}_i}$ for some $\vec{a}_i \in A$.
We claim that the multiset of vectors $\{\vec{a}_i\}_{i = 1}^n$ can be arranged into a sequence being a correct run of the VAS $\cal A$
from $\vec{i}$ to $\vec{f}$.
For this purpose we define a binary relation of \emph{immediate consequence} on data vectors $\vec{w}_i$: we say that
$\vec{w}_j$ is an immediate consequence of $\vec{w}_i$ if the intersection of $\supportplus{\vec{w}_i}$ and 
$\supportminus{\vec{w}_j}$ is non-empty. We observe that the reflexive-transitive closure of the immediate consequence
is a partial order. Indeed, antisymmetry follows due to the fact that all data vectors $\vec{w}_i$ satisfy
$\supportminus{\vec{w}_i} < \supportplus{\vec{w}_i}$.
Let $\preceq$ denote an arbitrary extension of the partial order to a total order, and suppose w.l.o.g.~that
$$\vec{w}_1 \prec \vec{w}_2 \prec \ldots \prec \vec{w}_n.$$
We should prove that the corresponding sequence $\vec{a}_1 \vec{a}_2 \ldots \vec{a}_n$ of vectors from $A$ is a correct run of
the VAS $\cal A$ from $\vec{i}$ to $\vec{f}$. 
This will follow, once we demonstrate that the sequence $\vec{w}_1 \vec{w}_2 \ldots \vec{w}_n$ is a correct
run in the VAS with ordered data with transitions $\setV$ and the initial configuration $\vec{\bar i}$.
We need to prove the data vector
$
\vec{u}_i = \vec{\bar i} + \sum_{i = 1}^{j} \vec{w}_i
$
is nonnegative for every $j \in \{0,\ldots, n\}$.
To this aim fix $\da \in \setD$ and $l \in \{1, \ldots, d\}$, and consider the sequence of numbers
\begin{align} \label{eq:seq}
\vec{u}_0(\da, l), \quad \vec{u}_1(\da, l), \quad \ldots \quad \vec{u}_n(\da, l)
\end{align}
appearing as the value of the consecutive data vectors $\vec{u}_0$, $\vec{u}_1$, $\ldots$, $\vec{u}_n$ 
at data value $\da$ and coordinate $l$.
We know that the first element of the sequence $\vec{u}_0(\da, l) = \vec{i}(\da, l) \geq 0$ and 
the last element of the sequence $\vec{u}_n(\da, l) = \vec{f}(\da, l) \geq 0$.
Furthermore, by the definition of the ordering $\preceq$ we know that the sequence~\eqref{eq:seq} is first non-decreasing, and 
then non-increasing.
These conditions imply nonnegativeness of all numbers in the sequence.

\begin{remark}
The exponential blowup in the reduction is caused only by binary encoding of numbers in vector addition systems;
it can be avoided if numbers are assumed to be encoded in unary 
or, equivalently, if instead of vector addition systems one uses counter machines without zero tests.
\end{remark}

% !TEX root = main.tex

\section{Histograms}  \label[section]{sec:hist}

The purpose of this section is to transform the \pproductprob to a more manageable form.
As the first step, we eliminate data by rephrasing the problem in terms of matrices.
Then, we distinguish matrices with certain combinatorial property, 
called \emph{histograms}, and use them to further simplify the problem.
In Lemma~\ref{lem:first} at the end of this section
we provide a final characterisation of the problem, using \emph{multihistograms}.
The characterisation will be crucial for effectively solving the \pproductprob in the following Section~\ref{sec:toVAS}.

In this section, all matrices are integer matrices, and all data vectors are integer data vectors.

\para{Eliminating data}
Rational matrices with $r$ rows and $c$ columns we call 
$\matr{r}{c}$-matrices, and $r$ (resp.~$c$) we call 
row (resp.~column) dimension of an $\matr{r}{c}$-matrix. 
We are going to represent any data vector $\vec v$ as 
a $\matr{d}{\card{\support{\vec v}}}$-matrix $M_{\vec{v}}$ as follows:
if $\support{\vec v} = \set{\alpha_1 < \alpha_2 < \ldots < \alpha_n}$, we put
\[
M_{\vec v}(i, j) \ \eqdef \ \vec v(i)(\alpha_j).
\]
A \emph{0-extension} of an $\matr{r}{c}$-matrix $M$ is any $\matr{r}{c'}$-matrix $M'$, $c' \geq c$, 
obtained from $M$ by inserting arbitrarily $c' - c$ additional
zero columns $\zerovec \in \Z^r$. 
Thus row dimension is preserved by 0-extension, and column dimension may grow arbitrarily.
We denote by $\zeroext{M}$ the (infinite) set of all 0-extensions of a matrix $M$.
In particular, $M \in \zeroext{M}$.
For a set $\setM$ of matrices we denote by $\zeroext{\setM}$ the set of all 0-extensions of all matrices in $\setM$. 
%Note that $\zeroext{\zeroext{X}} = \zeroext{X}$.
%

\begin{example}  \label[example]{ex:matr}
For a data vector $\vec v$ with support $\support{\vec v} = \set{\alpha_1 < \alpha_2}$, defined by
$\vec v(\alpha_1) = (1, 3, 0) \in \Z^3$ and $\vec v(\alpha_2) = (2, 0, 2) \in \Z^3$, here is the corresponding
matrix and two its exemplary 0-extension:
\[
M_{\vec v} = 
\begin{bmatrix}
1 & 2 \\
3 & 0 \\
0 & 2
\end{bmatrix}
\qquad
\begin{bmatrix}
0 & 1 & 2  \\
0 & 3 & 0  \\
0 & 0 & 2 
\end{bmatrix}
,
\begin{bmatrix}
1 & 0 & 0 & 2 \\
3 & 0 & 0 & 0 \\
0 & 0 & 0 & 2
\end{bmatrix}
\in \zeroext{M_{\vec v}}
\]
\end{example}
%
%An operation which inverses 0-extension is \emph{0-contraction}, which removes from a matrix all 
%zero columns. Thus row dimension is preserved by 0-contraction, and column dimension may decrease.
%The 0-contraction of $M$ we denote by $\zerocontr{M}$. 
%Note that $\zerocontr{M'} = \zerocontr{M''}$ for every
%two matrices $M', M'' \in \zeroext{M}$, but $\zerocontr{M'} = M$ holds if, and only if $M$ has no zero columns.
%
Below, whenever we add matrices we silently assume that they have the same row and column dimensions.
For a finite set $\setM$ of matrices, we say that a matrix $N$ is a sum of 0-extensions of $\setM$ if
\begin{align} \label{eq:zeroextsum}
N  =  \sum_{i=1}^{m} M_i
\end{align}
for some matrices $M_1, \ldots, M_m \in \zeroext{\setM}$, 
necessarily all of the same row and column dimension.
We claim that the \pproductprob is equivalent to the question 
whether some 0-extension of a given matrix $\targetMatr$ is a sum of 0-extensions of $\setM$.

\decproblem{\zeroextprob}
{a finite set $\setM$ of matrices, and a target matrix $\targetMatr$, all of the same row dimension $d$}
{is some 0-extension of $\targetMatr$ a sum of 0-extensions of $\setM$?}

\begin{lemma} \label{lem:equiv}
The \pproductprob % polynomially reduces (actually, 
is polynomially equivalent to the \zeroextprob.
\end{lemma}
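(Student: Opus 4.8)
The plan is to establish a bijective correspondence between data vectors and matrices (up to the choices introduced by 0-extension and data permutation), and then to show that this correspondence transports $\pproduct$ sums into sums of 0-extensions and back. First I would make the translation in both directions explicit. Given an instance $(\setV, \target)$ of the \pproductprob, I map each data vector $\vec v \in \setV$ to its matrix $M_{\vec v}$ as defined above, obtaining $\setM = \setof{M_{\vec v}}{\vec v \in \setV}$, and I map $\target$ to $\targetMatr = M_{\target}$. Conversely, given a $\zeroextprob$ instance, each column of a matrix can be read as the value of a data vector at one data value, and since we may freely pick the underlying data values (the order is all that matters), every matrix of row dimension $d$ is $M_{\vec v}$ for some data vector $\vec v$. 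This gives the polynomial-time translation in both directions; it is clearly size-preserving up to a polynomial factor.

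The core of the argument is the following observation, which I would state as the crux: summing data vectors corresponds exactly to summing 0-extensions of their matrices. The subtlety, and the reason 0-extensions are needed, is that $M_{\vec v}$ records only the nonzero columns of $\vec v$ in increasing order of data value, so two data vectors with different supports produce matrices of different column dimensions and with misaligned columns. When we form a sum $\target = \sum_i \vec v_i \circ \theta_i$, the supports of the summands $\vec v_i \circ \theta_i$ interleave within the support of $\target$; padding each $M_{\vec v_i \circ \theta_i}$ with zero columns in the positions corresponding to data values outside its own support turns it into a 0-extension of a common column dimension, and the pointwise sum of these padded matrices is exactly a 0-extension of $M_{\target}$. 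Crucially, applying a data permutation $\theta_i$ does not change $M_{\vec v_i}$ at all, since $\theta_i$ preserves the increasing order of the support and hence the left-to-right ordering of columns; this is precisely why the $\pproduct$ construction, which quotients by data permutations, matches the 0-extension construction, which quotients by the insertion positions of zero columns.

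For the forward direction I would therefore take a witnessing $\pproduct$ decomposition of $\target$, list all data values appearing in the union of the supports of $\target$ and the summands in increasing order, and pad every matrix $M_{\vec v_i \circ \theta_i} = M_{\vec v_i}$ into a $0$-extension indexed by this common ordered list; the equation $\target = \sum_i \vec v_i \circ \theta_i$ read column by column becomes $N = \sum_i M_i$ for a suitable $0$-extension $N$ of $\targetMatr$. For the backward direction I would reverse this: given $N \in \zeroext{\targetMatr}$ with $N = \sum_i M_i$ and $M_i \in \zeroext{\setM}$, I choose any strictly increasing tuple of data values of the right length to index the columns, reinterpret each $M_i$ as a data vector $\vec w_i$, and observe that each $\vec w_i$ is some $\vec v \in \setV$ composed with the data permutation that maps $\vec v$'s support to the chosen data values in an order-preserving way, so $\vec w_i \in \perm{\setV}$; the sum then realizes $\target$ as a $\pproduct$.

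The main obstacle I anticipate is bookkeeping, not conceptual difficulty: one must verify that the interleaving of supports can always be captured by a single common ordered index set, and that zero columns inserted by $0$-extension correspond exactly to data values at which a given summand vanishes. I would treat carefully the fact that the target's support need not contain the supports of the summands (cancellation can occur, since these are integer data vectors), which is exactly why it is \emph{some} $0$-extension of $\targetMatr$, rather than $\targetMatr$ itself, that must equal the sum.
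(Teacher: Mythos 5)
Your proposal matches the paper's proof essentially step for step: the same translation $\targetMatr = M_\target$, $\setM = \setof{M_{\vec v}}{\vec v \in \setV}$, the same forward argument padding each summand's matrix over the common ordered union of supports (using that $M_{\vec v \circ \theta} = M_{\vec v}$, which the paper uses implicitly), and the same backward argument choosing an increasing tuple of data values and order-preserving permutations sending each $\support{\vec v_i}$ to the nonzero columns of $M_i$; your remark on cancellation correctly identifies why \emph{some} 0-extension of $\targetMatr$ is needed. The only nitpick, in the converse reduction that the paper also omits, is that a general input matrix may contain zero columns and hence is not literally $M_{\vec v}$ for any data vector --- but this is repaired trivially by deleting zero columns first, which does not affect solvability.
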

\begin{proof}%[Proof of \Cref{lem:equiv}]
We describe the reduction of \pproductprob to the \zeroextprob.
(The opposite reduction
%, from the \zeroextprob to \pproductprob 
is shown similarly and is omitted here.)

Given an instance $\target, \setV$ of the former problem, we define the instance
\[
\targetMatr=M_\target, \quad \setM=\setof{M_\vec v}{\vec v \in \setV}
\]
of the latter one. We need to show that $\target$ is \pproduct of $\setV$ if, and only if 
some 0-extension $N$ of $\targetMatr$ is a sum of 0-extensions of $\setM$.
In one direction, suppose $\target$ is a \pproduct of $\setV$, i.e.,
\begin{align} \label{eq:pprodeq}
\target=\sum_{i=1}^{m} \vec{v}_i \circ\rho_{i}
\end{align}
and let $\set{\alpha_1 < \ldots < \alpha_c}$ be the union of all supports of data vectors $\vec{v}_i \circ \rho_i$
(thus also necessarily including the support of $\target$).
We will define a matrix $N$ and matrices $M_1, \ldots, M_m$, 
as required in~\eqref{eq:zeroextsum}, all of the same column dimension $c$.
Thus their columns will correspond to data values $\alpha_1, \ldots, \alpha_c$.
Let $N$ be the unique 0-extension of $M_\target$ of column dimension $c$ so that the nonempty columns
are exactly those corresponding to element of $\support{\target}$.
Similarly, let $M_i$ be the unique $0$-extension of $\vec{v}_i \circ \rho_i$ of column dimension $c$,
whose nonempty columns correspond to elements of $\support{\vec{v}_i \circ \rho_i}$.
The so defined matrices satisfy the equality~\eqref{eq:zeroextsum}.

In the other direction, suppose the equality~\eqref{eq:zeroextsum} holds for some matrices $N \in\zeroext{M_\target}$ and 
$M_1 \in \zeroext{M_{\vec v_1}}$ $\ldots$ $M_m\in\zeroext{M_{\vec v_m}}$,
and let $c$ be their common column dimension.
Choose arbitrary $c$ data values $\alpha_1 < \alpha_2 < \ldots < \alpha_c$ so that
$\support{\target} \subseteq \set{\alpha_1, \ldots, \alpha_c}$ corresponds to nonempty columns of $N$, and define data permutations
$\rho_1 \ldots \rho_m$ so that $\rho_i$ maps the support of $\vec v_i$ to data values corresponding to nonempty columns
in $M_i$. One easily verifies that~\eqref{eq:pprodeq} holds, 
as required.
\end{proof} 

%\noindent
From now on we concentrate on solving the %latter problem.  
\zeroextprob.

\para{Histograms}
We write briefly $\sum H(i, 1 \ldots j)$ as a shorthand for 
$\sum_{1\leq l\leq j}  H(i,l)$.
In particular, $\sum H(i, 1\ldots 0) = 0$ by convention.
An integer matrix we call nonnegative if it only contains nonnegative integers.
Histograms, to be defined now, are an extension of histograms of~\cite{HLT2017LICS} to ordered data.
\begin{definition} \label{def:hist}
A nonnegative integer $\matr{r}{c}$-matrix $H$ we call a \emph{histogram} if 
the following conditions are satisfied:
\begin{itemize}
\item there is $s>0$ such that $\sum H(i,1\ldots c)=s$ for every $1\leq i \leq  r$;
$s$ is called the \emph{degree} of $H$;
\item for every $1\leq i<r$ and $0\leq j < c$, the inequality holds:
$$\sum H(i, 1 \ldots j) \geq \sum H(i+1, 1\ldots j+1).$$
\end{itemize}
\end{definition}

\noindent
Note that the definition enforces $r \leq c$,~i.e., the column dimension $c$ of a histogram 
is at least as large as its row dimension $r$.
Indeed, forcedly
\begin{align*}
& H(2, 1) \ = \ 0 \\
& H(3, 1)  \ = \ H(3, 2) \ = \ 0 \\
& \ldots \\
& H(r, 1)  \ = \ \ldots  \qquad = \ H(r, r-1) \ = \ 0.
\end{align*}
Histograms of degree $1$ we call \emph{simple} histograms.
\begin{example}
A histogram of degree $2$ decomposed as a sum of two simple histograms:
\begin{align*}
&\begin{bmatrix}
1 & 1 & 0 & 0 & 0 \\
0 & 0 & 2 & 0 & 0 \\
0 & 0 & 0 & 1 & 1
\end{bmatrix}
\ = \  
\begin{bmatrix}
1 & 0 & 0 & 0 & 0 \\
0 & 0 & 1 & 0 & 0 \\
0 & 0 & 0 & 1 & 0
\end{bmatrix}
\ + \ 
\begin{bmatrix}
0 & 1 & 0 & 0 & 0 \\
0 & 0 & 1 & 0 & 0 \\
0 & 0 & 0 & 0 & 1
\end{bmatrix}
%+
%\begin{bmatrix}
%0 & 0 & 1 & 0 & 0 \\
%0 & 0 & 0 & 1 & 0 \\
%0 & 0 & 0 & 0 & 1
%\end{bmatrix}&
\end{align*}
\end{example}
The following combinatorial property of histograms will be crucial in the sequel:
\begin{lemma}\label[lemma]{lem:histogram}
$H$ is a histogram of degree $s$ if, and only if $H$ is a sum of $s$ simple histograms.
\end{lemma}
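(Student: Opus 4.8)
The plan is to prove the two implications separately, with the forward one (sum of simple histograms $\Rightarrow$ histogram) being routine and the decomposition carrying all the weight. For the easy direction I would note that both defining conditions are preserved under addition: if $H = H_1 + \ldots + H_s$ with each $H_k$ simple, then prefix sums add, so for each row $\sum H(i,1\ldots c) = \sum_k \sum H_k(i,1\ldots c) = s$, giving degree $s$, and $\sum H(i,1\ldots j) = \sum_k \sum H_k(i,1\ldots j) \geq \sum_k \sum H_k(i+1,1\ldots j+1) = \sum H(i+1,1\ldots j+1)$, giving the staircase inequality; nonnegativity is immediate.

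For the hard direction I would proceed by induction on the degree $s$. The case $s=1$ is trivial since $H$ is then already simple. For $s \geq 2$ the goal is to peel off a single simple histogram $H_1$ so that $H - H_1$ is a histogram of degree $s-1$, and then apply the induction hypothesis. I define $H_1$ by placing, in each row $i$, a single $1$ in the \emph{rightmost} nonzero column $p(i) = \max\{\, j : H(i,j) \geq 1 \,\}$ (well-defined since each row sums to $s>0$) and $0$ elsewhere. The organising observation, used repeatedly, is that $p(i) \leq j$ forces all mass of row $i$ to lie in columns $\leq j$, so that $\sum H(i,1\ldots j) = s$, its maximal value.

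Two facts must then be checked: that $H_1$ is simple, and that $H - H_1$ is a histogram. I first establish $p(1) < p(2) < \ldots < p(r)$: if $p(i) \geq p(i+1)$ for some $i$, then the staircase inequality of $H$ at $j = p(i+1)-1$ would force $\sum H(i,1\ldots p(i+1)-1) \geq \sum H(i+1,1\ldots p(i+1)) = s$, so row $i$ would carry no mass beyond column $p(i+1)-1$, contradicting $p(i) \geq p(i+1)$. Strict monotonicity of $p$ immediately yields the staircase inequality for $H_1$, so $H_1$ is simple; and $H - H_1$ is clearly nonnegative with every row summing to $s-1$.

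The main obstacle is verifying the staircase inequality for $H - H_1$. Writing $g_i(j)$ for the prefix sum of $H_1$ in row $i$ up to column $j$, which equals $1$ exactly when $p(i) \leq j$, the inequality to verify is $\sum H(i,1\ldots j) - g_i(j) \geq \sum H(i+1,1\ldots j+1) - g_{i+1}(j+1)$. When the two indicators agree this reduces to the staircase inequality for $H$; when $g_i(j)=0$ and $g_{i+1}(j+1)=1$ the right-hand side only drops relative to $H$, so it holds a fortiori. The delicate case is $g_i(j)=1$ and $g_{i+1}(j+1)=0$, where subtracting from the left could break a tight inequality: here $p(i)\leq j$ gives $\sum H(i,1\ldots j) = s$ by the organising observation, while $p(i+1) \geq j+2$ puts at least one unit of row $i+1$ in a column $> j+1$, so $\sum H(i+1,1\ldots j+1) \leq s-1$; hence $\sum H(i,1\ldots j) - 1 = s-1 \geq \sum H(i+1,1\ldots j+1)$, exactly as required. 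Thus $H - H_1$ is a histogram of degree $s-1$, the induction hypothesis splits it into $s-1$ simple histograms, and adjoining $H_1$ completes the decomposition of $H$.
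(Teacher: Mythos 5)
Your proof is correct and takes essentially the same approach as the paper's: induction on the degree, peeling off one simple histogram and verifying by a three-way case analysis on prefix sums that the remainder is still a histogram of degree $s-1$. The only difference is a mirror-image choice---you place the peeled $1$'s at the \emph{rightmost} nonzero entry of each row where the paper uses the \emph{leftmost}---and you additionally spell out the strict monotonicity $p(1) < p(2) < \cdots < p(r)$, which the paper's case analysis leaves implicit.
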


\noindent
Below, whenever we multiply matrices we silently assume that the column dimension of the first one is the same as 
the row dimension of the second one.
Simple histograms are useful for characterising 0-extensions: 
\begin{lemma} \label[lemma]{lem:simplehistsum}
For matrices $N$ and $M$, $N \in \zeroext{M}$ if, and only if
$N  =  M \cdot S$, for a simple histogram $S$.
\end{lemma}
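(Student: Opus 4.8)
The plan is to pin down exactly what a simple histogram is, and then to recognise that right-multiplication by such a matrix is literally the operation of inserting zero columns.

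First I would characterise simple histograms as strictly increasing selection matrices. Since a simple histogram $S$ has degree $1$, every row sums to $1$; being a nonnegative integer matrix, each row therefore has a single entry equal to $1$ and all others $0$. Writing $k$ for its row dimension and $c$ for its column dimension, such an $S$ is thus determined by the function $\sigma : \set{1, \ldots, k} \to \set{1, \ldots, c}$ sending $l$ to the unique column with $S(l, \sigma(l)) = 1$. I would then read the partial-sum inequality of Definition~\ref{def:hist} in this degree-$1$ case: since $\sum S(l, 1\ldots j)$ equals $1$ exactly when $\sigma(l) \leq j$ and $0$ otherwise, instantiating the inequality $\sum S(l, 1\ldots j) \geq \sum S(l+1, 1\ldots j+1)$ at $j = \sigma(l+1) - 1$ forces $\sigma(l) < \sigma(l+1)$, and conversely strict monotonicity of $\sigma$ makes the inequality hold for all $j$. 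Hence simple histograms of these dimensions correspond bijectively to strictly increasing injections $\sigma : \set{1, \ldots, k} \to \set{1, \ldots, c}$.

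Next I would compute the product. For $M$ an $\matr{d}{k}$-matrix and $S$ the simple histogram of $\sigma$, matrix multiplication gives $(M \cdot S)(i, j) = M(i, l)$ when $j = \sigma(l)$ for the (necessarily unique) such $l$, and $0$ otherwise. In other words, $M \cdot S$ places the $l$-th column of $M$ at position $\sigma(l)$---in their original left-to-right order, since $\sigma$ is increasing---and fills the remaining $c - k$ positions with zero columns. This is precisely a $0$-extension of $M$, which settles both directions at once: the ``if'' direction because $M \cdot S$ is manifestly of this form, and the ``only if'' direction because any $N \in \zeroext{M}$ exhibits the columns of $M$ in order at some positions $\sigma(1) < \cdots < \sigma(k)$ with zero columns inserted elsewhere, so that the injection $\sigma$ yields a simple histogram $S$ with $N = M \cdot S$.

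The only genuinely delicate step is the first one, translating the histogram partial-sum inequality into strict monotonicity of $\sigma$; once that dictionary is in place, everything reduces to unwinding the definitions of matrix product and of $0$-extension.
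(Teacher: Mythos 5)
Your proof is correct and follows essentially the same route as the paper's: both set up the bijection between simple histograms and strictly increasing functions $\sigma : \set{1,\ldots,k} \to \set{1,\ldots,c}$, and then observe that right-multiplication by the histogram of $\sigma$ places the columns of $M$ at positions $\sigma(1) < \cdots < \sigma(k)$ and fills the rest with zero columns, which is exactly the definition of a $0$-extension. If anything, you are slightly more careful than the paper, which merely asserts that the associated function is ``monotonic,'' whereas you derive the needed \emph{strict} monotonicity from the partial-sum inequality by instantiating it at $j = \sigma(l+1)-1$.
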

\begin{example}
%Before proving~\cref{lem:simplehistsum}, c
Recall the matrix $M = M_\vec v$ from~\cref{ex:matr}.
One of the matrices from $\zeroext{M}$ is presented as multiplication of $M$ and a simple histogram as follows:
\begin{align*}
%\begin{bmatrix}
%0 & 1 & 2  \\
%0 & 3 & 0  \\
%0 & 0 & 2 
%\end{bmatrix}
%\quad & = &&
%\begin{bmatrix}
%1 & 2 \\
%3 & 0 \\
%0 & 2
%\end{bmatrix}
%\cdot
%\begin{bmatrix}
%0 & 1 & 0 \\
%0 & 0 & 1\\
%\end{bmatrix}
%\qquad&
%\\
\begin{bmatrix}
1 & 0 & 2 & 0 \\
3 & 0 & 0 & 0 \\
0 & 0 & 2 & 0
\end{bmatrix}
\quad & = &&
\begin{bmatrix}
1 & 2 \\
3 & 0 \\
0 & 2
\end{bmatrix}
\cdot
\begin{bmatrix}
1 & 0 & 0 & 0 \\
0 & 0 & 1 & 0 \\
\end{bmatrix}&
%\ \in \ \zeroext{M}
\end{align*}
\end{example}
We use Lemmas~\ref{lem:histogram}  and~\ref{lem:simplehistsum} to characterise the \zeroextprob:
\begin{lemma} \label[lemma]{lem:histsum}
For a matrix $N$ and a finite set of matrices $\setM$, the following conditions are equivalent:

\begin{enumerate}
\item \label{eq:1} $N$ is a sum of 0-extensions of $\setM$;
\item \label{eq:2} $N \ = \ \sum_{M \in \setM} M \cdot H_M$, for some histograms $\setof{H_M}{M\in \setM}$.
\end{enumerate}
\end{lemma}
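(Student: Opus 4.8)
The plan is to prove the two implications separately, in each case using the two characterisation lemmas that precede this statement. The key observation is that both conditions express $N$ as a sum built from the matrices in $\setM$ via 0-extensions, and the difference is merely \emph{how} the 0-extensions are bookkept: condition~\eqref{eq:1} keeps them as an unstructured multiset of individual 0-extensions, whereas condition~\eqref{eq:2} groups, for each $M \in \setM$, all the 0-extensions of $M$ into a single matrix product $M \cdot H_M$. Lemma~\ref{lem:simplehistsum} says that a single 0-extension of $M$ is exactly $M \cdot S$ for a \emph{simple} histogram $S$, and Lemma~\ref{lem:histogram} says that summing several simple histograms yields exactly the histograms. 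So the whole proof is really an exercise in distributing the sum and reindexing it by which matrix of $\setM$ each summand extends.

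First I would prove \eqref{eq:1}$\Rightarrow$\eqref{eq:2}. Assume $N = \sum_{i=1}^m M_i$ with each $M_i \in \zeroext{\setM}$; by definition each $M_i$ is a 0-extension of some matrix in $\setM$, and by padding with zero columns I may assume all the $M_i$ and $N$ share a common column dimension. For each $M \in \setM$, collect the indices $i$ with $M_i \in \zeroext{M}$; by Lemma~\ref{lem:simplehistsum} write each such $M_i = M \cdot S_i$ for a simple histogram $S_i$ (all of the same dimensions, after the padding). Then
\[
\sum_{M_i \in \zeroext{M}} M_i \ = \ M \cdot \Bigl(\sum_{M_i \in \zeroext{M}} S_i\Bigr),
\]
by distributivity of matrix multiplication over addition. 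Setting $H_M$ to be this inner sum of simple histograms, Lemma~\ref{lem:histogram} guarantees $H_M$ is a histogram (of degree equal to the number of summands, provided that number is positive; if $M$ contributes no summand, I take $H_M$ to be a histogram whose product with $M$ is the zero matrix, or simply omit it). Summing over all $M \in \setM$ gives exactly $N = \sum_{M \in \setM} M \cdot H_M$, which is~\eqref{eq:2}.

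The converse \eqref{eq:2}$\Rightarrow$\eqref{eq:1} reverses this computation. Given $N = \sum_{M \in \setM} M \cdot H_M$, apply Lemma~\ref{lem:histogram} to decompose each histogram $H_M$ as a sum $S_{M,1} + \cdots + S_{M,s_M}$ of simple histograms. Distributivity then yields $M \cdot H_M = \sum_{k=1}^{s_M} M \cdot S_{M,k}$, and by Lemma~\ref{lem:simplehistsum} each $M \cdot S_{M,k}$ is a 0-extension of $M$, hence a member of $\zeroext{\setM}$. Collecting all these terms across $M \in \setM$ exhibits $N$ as a sum of 0-extensions of $\setM$, i.e.~\eqref{eq:1}. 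The only genuine subtlety, and the step I expect to require the most care, is the bookkeeping of column dimensions: the two characterisation lemmas are stated for fixed matrix dimensions, so I must ensure throughout that all simple histograms and all 0-extensions are taken with a common column dimension (large enough to accommodate every support), padding with zero columns where necessary; once dimensions are aligned, the algebraic manipulation is routine distributivity.
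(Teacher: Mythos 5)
Your proof is correct and takes essentially the same route as the paper's: both directions combine Lemma~\ref{lem:simplehistsum} (a 0-extension of $M$ is exactly $M \cdot S$ for a simple histogram $S$) with the two directions of Lemma~\ref{lem:histogram} (histograms are exactly sums of simple histograms), together with distributivity and regrouping of summands by the matrix of $\setM$ being extended. Your explicit handling of the edge case where some $M \in \setM$ contributes no summand, and of aligning column dimensions, is if anything more careful than the paper's own proof, which glosses over both points.
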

\begin{proof}
In one direction, assume condition~\ref{eq:1}. holds, i.e.,
\begin{align} \label{eq:assumed}
N \ = \ \sum_\cokol N_\cokol  % \qquad (N_\cokol \in \zeroext{\setM})
\end{align}
for $N_\cokol \in \zeroext{M_\cokol}$, $M_\cokol \in \setM$, and then apply~\cref{lem:simplehistsum} to get
(simple) histograms $H_\cokol$ with $N_\cokol = M_\cokol \cdot H_\cokol$. 
%As all matrices $N_\cokol$ have the sa column dimension, the same applies to all the histograms $H_\cokol$. 
Thus
%
%\begin{align*}  \label{eq:proved}
$
N  =  \sum_\cokol M_\cokol \cdot H_\cokol.
$
%\end{align*}
%
Now apply the if direction of~\cref{lem:histogram} to get the histograms $H_M$ as required in condition~\ref{eq:2}.
In the other direction, assume condition~\ref{eq:2}.~holds,  
and use the only if direction of~\cref{lem:histogram} to decompose every $H_\cokol$ into simple
histograms. This yields
\[
N \ = \ \sum_\cokol M_\cokol \cdot S_\cokol,
\]
where all $M_\cokol \in \setM$ all $S_\cokol$ are simple histograms.
Finally we apply~\cref{lem:simplehistsum} to get matrices $N_\cokol$ satisfying~\eqref{eq:assumed}. This completes the proof.
\end{proof}

\para{Multihistograms}
%Note that the histograms $H_M$ appearing in condition~\ref{eq:2}.~in~\cref{lem:histsum} have 
%necessarily the same column dimension, and 
%the row dimension of every $H_M$ is equal to the column dimension of $M$.
%
Using Lemma~\ref{lem:histsum} we are now going to work out our final characterisation of the \zeroextprob, as formulated in Lemma~\ref{lem:first} below.
We write $H(i, \_)$ and $H(\_, j)$ for the $i$-th row and the $j$-th column of a matrix $H$, respectively.
For an indexed family $\{H_1, \ldots, H_k\}$ %\setof{H_M}{M\in\setM}$ 
of matrices, 
its $j$-th column is defined as the indexed family of $j$-th columns
of respective matrices $\{H_1(\_, j), \ldots, H_k(\_, j)\}$.
%$\setof{H_M(\_, j)}{M\in\setM}$.

Fix an input of the \zeroextprob: a matrix $\targetMatr$ and a finite set $\setM = \set{M_1, \ldots, M_k}$ of matrices, 
all of the same row dimension $d$.
%, such that the row dimension of every $H_M$ is equal to the column dimension of $M$.
Let $c_\cokol$ stand for the column dimension of $M_\cokol$. 
Suppose that some $N \in \zeroext{\targetMatr}$ and some family $\hist = \set{H_1, \ldots, H_k}$ of histograms
satisfy
%(the ordering $H_1, \ldots, H_k$ is arbitrary, and we write $H_\cokol$ instead of $H_{M_\cokol}$)
%
\begin{align*}
N \ = \ M_1 \cdot H_1 \ + \ \ldots \ + \ M_k \cdot H_k.
\end{align*}
(The row dimension of every $H_\cokol$ is necessarily $c_\cokol$.)
Boiling down the equation to a single entry of $N$ we get a linear equation:
\begin{align*}
N(i, j) \ = \ & M_1(i, \_) \cdot H_1(\_, j) \ + \
                    \ldots \ + \
                   M_k(i, \_) \cdot H_k(\_, j).
\end{align*}
By grouping all the equations concerning all entries of a single column $N(\_, j) \in \Z^d$ of $N$ we get a system of $d$
(= row dimension of $N$) linear equations:
\begin{align*}
N(\_, j) \quad & = \quad M_1 \cdot H_1(\_, j) \ + \
                    \ldots \ + \
                   M_k \cdot H_k(\_, j)
                    \\
                   & = \quad  \Big[ M_1 \mid \ldots \ \mid M_k \Big] \cdot 
\begin{bmatrix}              
H_1(\_, j) \\
\ldots \\
H_k(\_, j)
\end{bmatrix}
\end{align*}
Therefore, the $j$-th column of $\hist$, treated as a single column vector of length $s = c_1 + \ldots + c_k$,
is a nonnegative-integer solution of a system of $d$ linear equations $\setU_{\setM, N(\_, j)}$, 
with $s$ unknowns $x_1 \ldots x_s$, of the form:
\begin{align*}
N(\_, j)  \quad = \quad 
                    \Big[ M_1 \mid \ldots \ \mid M_k \Big] \cdot 
\begin{bmatrix}              
x_1 \\
\ldots \\
x_s
\end{bmatrix},
\end{align*}
Observe that the system $\setU_{\setM, N(\_, j)}$ depends on $\setM$ and $N(\_, j)$ but not on $j$.
For succinctness, 
for $\vec a \in\Z^d$ we put 
\begin{align} \label{eq:C}
\setC_{\vec a} \quad := \quad \sol{\N}{\setU_{\setM, \vec a}}
\end{align}
to denote the set of all nonnegative-integer solutions of $\setU_{\setM, \vec a}$.
Therefore, every $j$th column of the multihistogram $\hist$ belongs to $\setC_{N(\_, j)}$.

Now recall that $N \in \zeroext{\targetMatr}$. Therefore, treating $\hist$ as a sequence of its column vectors in $\N^c$
(we call this sequence \emph{the word of $\hist$}), we 
arrive at the condition that this sequence belongs to the following language:
% (when writing $\setC_0$ we denote by $0$ the zero vector $\zerovec$):
%
\begin{align} \label{eq:lang}
(\setC_{\zerovec})^* \ \setC_{\targetMatr(\_, 1)} \  
(\setC_{\zerovec})^* \ \setC_{\targetMatr(\_, 2)} \ \ldots \ 
(\setC_{\zerovec})^* \ \setC_{\targetMatr(\_, n)} \ 
(\setC_{\zerovec})^* 
\end{align}
where $n$ denotes the column dimension of $\targetMatr$.
If this is the case, we say that $\hist$ is an \emph{$(\targetMatr, \setM)$-multihistogram}. 
As the reasoning above is reversible, we have thus shown:

\begin{lemma} \label[lemma]{lem:first}
The \zeroextprob is equivalent to the following one:

\decproblem{\histprob}
{a finite set $\setM$ of matrices, and a matrix $\targetMatr$, all of the same row dimension $d$}
{does there exist an $(\targetMatr, \setM)$-multihistogram?}
%
%\begin{enumerate}
%\item some 0-extension of $\targetMatr$ is a sum of 0-extensions of $\setM$;
%\item there exists some $(\targetMatr, \setM)$-multihistogram.
%\end{enumerate}
%all of the same row dimension $d$.
%The \zeroextprob is equivalent to the question whether there exists an
%$(\targetMatr, \setM)$-multihistogram.
\end{lemma}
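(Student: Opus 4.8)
The plan is to observe that the displayed derivation preceding the lemma already contains both implications, so the proof amounts to packaging that reasoning into an explicit biconditional and checking that every step can be run backwards. The central device is Lemma~\ref{lem:histsum}, which lets us replace the geometric notion ``sum of $0$-extensions of $\setM$'' by the algebraic one ``$\sum_{M\in\setM} M\cdot H_M$ for histograms $H_M$''. With this substitution, a witness for the \zeroextprob becomes a pair consisting of some $N\in\zeroext{\targetMatr}$ together with histograms $\hist=\{H_1,\dots,H_k\}$ satisfying $N = M_1\cdot H_1 + \dots + M_k\cdot H_k$, and the whole task reduces to showing that existence of such a pair coincides with existence of an $(\targetMatr,\setM)$-multihistogram.

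For the \emph{forward direction} I would start from such a pair. Boiling the matrix equation down column by column, the $j$-th column of $\hist$, read as a single vector of length $c_1+\dots+c_k$, is a nonnegative-integer solution of the system $\setU_{\setM,N(\_,j)}$, hence by~\eqref{eq:C} lies in $\setC_{N(\_,j)}$. Since $N\in\zeroext{\targetMatr}$, its columns are, read left to right, the columns of $\targetMatr$ in order, interspersed with inserted zero columns; an inserted zero column contributes a factor $\setC_{\zerovec}$, and the $l$-th original column contributes the factor $\setC_{\targetMatr(\_,l)}$. Thus the word of $\hist$ lands in the language~\eqref{eq:lang}, so $\hist$ is the required multihistogram.

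The \emph{backward direction} runs the same computation in reverse. Given an $(\targetMatr,\setM)$-multihistogram $\hist$, I would define $N$ column-wise by letting its $j$-th column be $[M_1\mid\dots\mid M_k]$ applied to the $j$-th column of $\hist$. Membership of the word of $\hist$ in~\eqref{eq:lang} forces each such column of $N$ either to equal $\zerovec$ (when the corresponding factor is $\setC_{\zerovec}$) or to equal the appropriate column of $\targetMatr$, in the correct left-to-right order; hence $N\in\zeroext{\targetMatr}$. By construction $N=\sum_{\cokol} M_\cokol\cdot H_\cokol$ with each $H_\cokol$ a histogram, so Lemma~\ref{lem:histsum} exhibits $N$ as a sum of $0$-extensions of $\setM$, witnessing the \zeroextprob.

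I do not expect a genuine obstacle: the equivalence is essentially bookkeeping, with Lemma~\ref{lem:histsum} doing the real combinatorial work. The one point requiring care is the exact matching between the regular expression~\eqref{eq:lang} and the zero-column pattern of a $0$-extension --- in particular that interleaving the $(\setC_{\zerovec})^*$ blocks with the $\setC_{\targetMatr(\_,j)}$ factors captures precisely ``insert arbitrarily many zero columns while preserving the order of the original columns'', and that any column of $\targetMatr$ that happens to be zero is harmlessly absorbed, since $\setC_{\zerovec}$ then coincides with the corresponding factor. Once this correspondence is stated cleanly, reversibility of each step is immediate.
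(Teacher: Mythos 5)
Your proposal is correct and takes essentially the same route as the paper: the paper's proof \emph{is} the column-by-column derivation preceding the lemma (replace sums of 0-extensions by $\sum_\cokol M_\cokol \cdot H_\cokol$ via Lemma~\ref{lem:histsum}, then observe that each column of $\hist$ lies in $\setC_{N(\_,j)}$ so the word of $\hist$ matches the language~\eqref{eq:lang}), closed with the remark that "the reasoning above is reversible" --- which your backward direction simply spells out. Your explicit check that a zero column of $\targetMatr$ is harmlessly absorbed because $\setC_{\targetMatr(\_,l)}$ then coincides with $\setC_{\zerovec}$ is a point the paper leaves implicit, but it is the same argument.
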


% !TEX root = main.tex

%\section{Restricted histograms}  \label[section]{sec:restrictedhist}
\section{Upper bound for the \pproductprob} % \label[section]{sec:restrictedhist}
\label[section]{sec:upperbound}
%

%Relying on Corollary~\ref{cor:first}, 
We reduce in this section the \histprob (and hence also the \pproductprob, due to Lemmas~\ref{lem:equiv} and~\ref{lem:first}) 
to the VAS reachability problem (with single exponential blowup), thus obtaining 
decidability. 
Fix in this section an input to the \histprob: a matrix $\targetMatr$ (of column dimension $n$) 
and a finite set $\setM = \set{M_1, \ldots, M_k}$ of matrices, all of the same row dimension $d$.
We perform in two steps:
we start by proving an effective exponential bound on vectors appearing as columns of %the words of 
$(\targetMatr, \setM)$-multihistograms; % that strengthens~\eqref{eq:lang}; 
then we construct a %(labeled) 
VAS whose runs correspond to the words of exponentially bounded $(\targetMatr, \setM)$-multihistograms.
%that recognizes the words of all exponentially bounded $(\targetMatr, \setM)$-multihistograms.
For measuring the complexity we assume that all numbers in $\targetMatr$ and $\setM$ are encoded in binary.

\para{Exponentially bounded multihistograms} 
We need to recall first a characterisation of nonnegative-integer solution sets of systems of linear equations as
(effectively) exponentially bounded hybrid-linear sets,
i.e., of the form $B + P^\oplus$, for $B, P \subseteq \N^k$, where $k$ is the number of variables and
$P^\oplus$ stands for the set of all finite sums of vectors from $P$ (see e.g.~\cite{taming} (Prop.~2), \cite{Dom91}, \cite{Pot1991}).
By $\setU_{M, \vec a}$ denote a system of linear equations determined by a matrix $M$ and a column vector $\vec a$,
and by $\setU_{M,\zerovec}$ the corresponding \emph{homogeneous} systems of linear equations.  
Again, for measuring the size $\size{\setU_{M,\vec a}}$ of $\setU_{M,\vec a}$ we assume that all numbers in $M$ and $\vec a$ are encoded in binary.
%Let $\setU_0$ denote the system $\setU$ where all left-hand sides are replaced by $0$. 
%
\begin{lemma}[\cite{taming} Prop.~2] \label[lemma]{lem:taming}
%Let $\setU$ be a system of linear equations. Then 
$\sol{\N}{\setU_{M,\vec a}} = B + {P}^\oplus$, where $B, P \subseteq \N^k$ such that
all vectors in $B \cup P$ are bounded exponentially w.r.t.~$\size{\setU_{M,\vec a}}$ 
and $P \subseteq \sol{\N}{\setU_{M,\zerovec}}$. % consists of minimal non-zero solutions of $\setU_0$.
%Thus $B$ and $P$ are computable in \exptime.
\end{lemma}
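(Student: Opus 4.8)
The plan is to take $P$ to be the set of componentwise-minimal nonzero elements of $\sol{\N}{\setU_{M,\zerovec}}$, and $B$ to be the set of componentwise-minimal elements of $\sol{\N}{\setU_{M,\vec a}}$, where minimality refers to the product order on $\N^k$. Both sets are finite, since the minimal elements of any subset of $\N^k$ form an antichain and antichains in $(\N^k, \leq)$ are finite by Dickson's lemma. The inclusion $P \subseteq \sol{\N}{\setU_{M,\zerovec}}$ holds by construction, as demanded by the statement.

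Next I would establish the set equality $\sol{\N}{\setU_{M,\vec a}} = B + P^\oplus$. The inclusion from right to left is immediate, because $M(b + \sum_i p_i) = \vec a + \zerovec = \vec a$ whenever $b$ solves $\setU_{M,\vec a}$ and each $p_i$ solves $\setU_{M,\zerovec}$, and the sum stays nonnegative. For the converse I would first record the auxiliary fact that every $w \in \sol{\N}{\setU_{M,\zerovec}}$ lies in $P^\oplus$: if $w = \zerovec$ it is the empty sum, and otherwise some $p \in P$ satisfies $p \leq w$, so that $w - p$ is again a nonnegative homogeneous solution of strictly smaller coordinate sum, and induction on $\sum_i w(i)$ gives $w - p \in P^\oplus$, hence $w \in P^\oplus$. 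Then, given any $v \in \sol{\N}{\setU_{M,\vec a}}$, minimality provides $b \in B$ with $b \leq v$; since $v - b \geq \zerovec$ and $M(v - b) = \vec a - \vec a = \zerovec$, the difference $v - b$ is homogeneous and therefore lies in $P^\oplus$ by the auxiliary fact, so $v = b + (v - b) \in B + P^\oplus$.

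The main obstacle is the quantitative claim that every vector in $B \cup P$ is bounded single-exponentially in $\size{\setU_{M,\vec a}}$, about which the decomposition says nothing. The structural property I would exploit is \emph{indecomposability}: if $v \in B$ then no nonzero $w \in \sol{\N}{\setU_{M,\zerovec}}$ satisfies $w \leq v$, since otherwise $v - w$ would be a strictly smaller solution of $\setU_{M,\vec a}$, contradicting the minimality of $v$; and each $v \in P$ is likewise indecomposable in the additive monoid $\sol{\N}{\setU_{M,\zerovec}}$. I would then invoke the classical bound on indecomposable solutions of integer linear systems: writing such a solution as a sequence of unit steps and reordering them in Steinitz style so that the partial residues under $M$ stay inside a box of side $(k \cdot a_{\max})^{O(1)}$, any solution of $\ell_1$-norm exceeding $(k \cdot a_{\max})^{O(k)}$ would repeat a partial residue and hence split off a nonzero homogeneous sub-solution, contradicting indecomposability. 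Here $a_{\max}$ is the largest absolute value occurring in $M$ or $\vec a$, and since $k$ and $\log a_{\max}$ are at most $\size{\setU_{M,\vec a}}$ under binary encoding, the resulting bound is single-exponential; the homogeneous case $\vec a = \zerovec$ gives the same bound for $P$. This quantitative estimate is exactly the content of the cited proposition, and is where essentially all the difficulty resides.
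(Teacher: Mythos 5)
The paper does not prove this lemma at all: it is imported verbatim from the literature (Prop.~2 of \cite{taming}, which in turn rests on \cite{Pot1991,Dom91}), so there is no in-paper argument to compare against. Judged on its own, your proposal is essentially the standard proof underlying the cited result, and it is sound. The decomposition part is exactly Pottier's: take $P$ to be the minimal nonzero elements of $\sol{\N}{\setU_{M,\zerovec}}$ and $B$ the minimal elements of $\sol{\N}{\setU_{M,\vec a}}$, get finiteness from Dickson's lemma, and prove $\sol{\N}{\setU_{M,\vec a}} = B + P^\oplus$ by peeling off minimal elements with induction on the $\ell_1$-norm; the reduction of the size bound to indecomposability is also the right structural move. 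For the quantitative step you invoke (and sketch) the Steinitz-reordering/pigeonhole argument; note that this is the modern route (as in Eisenbrand--Weismantel-style arguments), whereas Pottier's original bound is obtained by a different algebraic induction --- either suffices for the single-exponential bound the lemma asserts. Two small corrections: the partial residues live in $\Z^d$ ($d$ the number of equations), not $\N^k$, so the box/pigeonhole count should be of the form $(d \cdot a_{\max})^{O(d)}$ rather than $(k \cdot a_{\max})^{O(k)}$ --- harmless here, since $d$, $k$ and $\log a_{\max}$ are all polynomial in $\size{\setU_{M,\vec a}}$; and in the homogeneous case the pigeonhole must be arranged to produce a repetition other than the trivial coincidence of the two endpoint residues (both zero), which the usual ``$N$ exceeds the residue count'' accounting handles. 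With these details filled in, your write-up would be a legitimate self-contained proof of the lemma, which is more than the paper itself attempts.
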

%

%\begin{proof}[Proof of~\cref{lem:boundedhist}]
We will use Lemma~\ref{lem:taming} together with the following operation on multihistograms. 
A $j$-\emph{smear} of a histogram $H$ is any nonnegative matrix $H'$ obtained by replacing 
$j$-th column $H(\_, j)$ of $H$ by two columns that sum up to $H(\_, j)$. Here is an example ($j =5$):
\begin{align*}
%\begin{split}
\begin{bmatrix}
 3&	0&	0&	1&	\color{red}{0}&	0 & 0\\
 0&	1&	0&	0&	\color{red}{3}&	0 & 0 \\
  0&	0&	0&	1&	\color{red}{0}&	1 & 2
\end{bmatrix} \to
\begin{bmatrix}
 3&	0&	0&	1&	\color{red}{0}&	\color{blue}{0}&0 & 0\\
 0&	1&	0&	0&	\color{red}{2}&	\color{blue}{1}&0 & 0 \\
 0&	0&	0&	1&	\color{red}{0}&	\color{blue}{0}&1 & 2
\end{bmatrix}
%\end{split}
\end{align*}
Formally, a $j$-smear of $H$ is any nonnegative matrix $H'$ satisfying:
\begin{align*}
 H'(\_, l) & \ =  \ H(\_, l) && \text{ for } l < j \\
 H'(\_, j) + H'(\_, j+1) & \ = \ H(\_, j) \\
 H'(\_, l+1) & \ = \ H(\_, l) && \text{ for } l > j. 
\end{align*}
One easily verifies that smear preserves the defining condition of histogram:
\begin{claim} \label{claim:smear}
A smear of a histogram is a histogram.
\end{claim}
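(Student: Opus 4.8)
The plan is to verify that $H'$ satisfies both defining conditions of a histogram from Definition~\ref{def:hist}, given that $H$ is a histogram and $H'$ is a $j$-smear of $H$. First I would check the degree condition. For each row $i$, the row sum of $H'$ equals the row sum of $H$, because the smear operation only splits one column into two columns summing to the original, while leaving all other columns untouched (merely reindexed). Concretely, $\sum H'(i, 1 \ldots c+1) = \sum H(i, 1 \ldots c)$ for every $i$, so if $H$ has degree $s$, then so does $H'$. This part is routine.

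The main work is the second (staircase) inequality $\sum H'(i, 1 \ldots l) \geq \sum H'(i+1, 1 \ldots l+1)$ for all relevant $i$ and $l$. The key observation is that partial row sums of $H'$ coincide with partial row sums of $H$ except possibly at the boundary created by the split. Precisely, for a prefix ending before the split (index $l < j$) or after it has been fully absorbed (index $l \geq j+1$, where the two new columns together reconstitute the old one), the partial sum $\sum H'(i, 1 \ldots l)$ equals either $\sum H(i, 1 \ldots l)$ or $\sum H(i, 1 \ldots l-1)$, matching a partial sum of $H$. The only genuinely new prefix is the one ending exactly at the first of the two split columns (at index $l = j$), whose value $\sum H'(i, 1 \ldots j)$ lies between $\sum H(i, 1 \ldots j-1)$ and $\sum H(i, 1 \ldots j)$, since we have added only a nonnegative portion of the split column $H(\_, j)$.

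The plan is therefore to argue that each inequality for $H'$ reduces to a corresponding inequality already known for $H$, using this sandwiching. For the critical new prefix, I would exploit that the left-hand side $\sum H'(i, 1 \ldots j)$ is at least $\sum H(i, 1 \ldots j-1)$ and the right-hand side involves the next row with one more column, and then bound it against the appropriate $H$-inequality, using nonnegativity of the split pieces. The I expect the main obstacle to be careful bookkeeping of indices across the split point: one must handle separately the cases $l < j$, $l = j$, and $l > j$, and within each case track whether the relevant prefix of row $i$ or row $i+1$ straddles the two new columns. Since a general smear can split into arbitrary nonnegative parts (not halves), the bounds must only use nonnegativity, not any symmetry.

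Once all three index regimes are checked and shown to follow from the histogram inequalities of $H$ together with nonnegativity of $H'$, the claim follows. I would keep the argument brief, noting that the degree condition is immediate and that the inequality condition follows by a straightforward case analysis on the position $l$ relative to the smeared column $j$, since every partial row sum of $H'$ either equals a partial row sum of $H$ or is sandwiched between two consecutive ones.
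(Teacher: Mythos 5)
Your proposal is correct: the degree is preserved since the two split columns sum to the original, and every partial row sum of $H'$ either equals a partial row sum of $H$ (for prefixes ending at $l<j$ or $l\geq j+1$) or is sandwiched between $\sum H(i,1\ldots j-1)$ and $\sum H(i,1\ldots j)$, so each staircase inequality for $H'$ reduces to one for $H$ together with nonnegativity. This is exactly the routine verification the paper intends — it states the claim with ``one easily verifies'' and provides no written proof, so your case analysis on $l<j$, $l=j$, $l>j$ is precisely the argument being left to the reader.
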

\noindent
Finally, a $j$-smear of a family of matrices $\{H_1, \ldots, H_k\}$ is any 
indexed family of matrices $\{H'_1, \ldots, H'_k\}$
obtained by applying a $j$-smear simultaneously to all matrices $H_\cokol$. 
We omit the index $j$ when it is irrelevant.

So prepared, 
%Fix an input $\targetMatr, \setM$ to the \zeroextprob.
we claim that every $(\targetMatr, \setM)$-multihistogram $\hist = \set{H_1, \ldots, H_k}$
can be transformed by a number of smears into an $(\targetMatr, \setM)$-multihistogram 
containing only numbers exponentially bounded with respect to  $\targetMatr$, $\setM$.
%whose word (i.e., the sequence of column vectors) belongs to
%%
%\begin{align} \label{eq:langstr}
%(P_M)^* \ B_{M,\targetMatr(\_, 1)} \
%(P_M)^* \ B_{M,\targetMatr(\_, 2)} \ \ldots \ 
%(P_M)^* \ B_{M,\targetMatr(\_, n)} \
%(P_M)^*
%\end{align}
%%
%where $n$ stands for the column dimension of $\targetMatr$.
%%
Indeed, recall~\eqref{eq:lang} and let 
\[
N \ = \ \sum_{\cokol = 1, \ldots, k} M_\cokol \cdot H_\cokol \ \ \in \ \zeroext{\targetMatr}.
\]
Take an arbitrary (say $j$-th) column $\vec w \in \setC_{\vec a}$ of $\hist$ (recall~\eqref{eq:C}), where $\vec a = N(\_, j)$, 
treated as a single column vector $\vec w \in \N^s$
(for $s$ the sum of row dimensions of $H_1, \ldots, H_k$), and present it (using Lemma~\ref{lem:taming}) as a sum
\[
\vec w \quad = \quad \vec b \ + \ \vec p_1 \ + \ \ldots \ + \ \vec p_m,
\]
for some exponentially bounded $\vec b \in \setC_{\vec a}$ and $\vec p_1,  \ldots, \vec p_m \in \setC_{0}$.
Apply smear $m$ times, replacing the $j$-th column by $m+1$ columns $\vec b, \vec p_1, \ldots, \vec p_m$.
As $\vec b$ is a solution of the system $\setU_{\setM, \vec a}$ and every $\vec p_\cokol$ is a solution of the homogeneous system 
$\setU_{\setM, \zerovec}$, 
\begin{align*}
&\Big[ M_1 \mid \ldots \ \mid M_k \Big] \cdot \vec b            \quad\  = && \Big[ M_1 \mid \ldots \ \mid M_k \Big] \cdot \vec w \\
&\Big[ M_1 \mid \ldots \ \mid M_k \Big] \cdot \vec p_\cokol \quad = && \zerovec,
\end{align*}
the so obtained family $\hist' = \{H'_1, \ldots, H'_k\}$ still satisfies the condition
$\sum_{\cokol = 1, \ldots, k} M_\cokol \cdot H'_\cokol \in \zeroext{\targetMatr}$.
Using Claim~\ref{claim:smear} we deduce that $\hist'$ is an $(\targetMatr, \setM)$-multihistogram.
Repeating the same operation for every column of $\hist$ yields the required exponential bound.
%
%the value of the obtained multihistogram is a 0-extension of the value of the original one. Indeed,
%Let $\hist' = \set{H'_1, \ldots, H'_k}$ be an $\setM$-multigistogram obtained by transforming in a similar way all 
%columns of $\hist$.
%As the value of $\hist'$ is necessarily a 0-extension of the value of $\hist$,
%the proof is completed.
%\end{proof}
%
%\begin{corollary} \label[corollary]{cor:second}
%The \zeroextprob is equivalent to the question whether some 0-extension of $\targetMatr$ is the value of some
%$\setC$-restricted $\setM$-multihistogram, for a set $\setC$ computable in \exptime from $\targetMatr$ and $\setM$. 
%\end{corollary}

%\para{The upper bound}\label[section]{sec:toVAS}
\para{Construction of a VAS}\label[section]{sec:toVAS}
Given $\targetMatr$ and $\setM$ we now construct a VAS whose runs correspond to the words of
exponentially bounded $(\targetMatr, \setM)$-multihistograms. 
Think of the VAS as reading (or nondeterministically guessing) 
consecutive column vectors (i.e., the word) of a potential $(\targetMatr, \setM)$-multihistogram 
$\hist = \{H_1, \ldots, H_k\}$.
The VAS has to check two conditions: 

\begin{enumerate}
\item the word of $\hist$ belongs to the language~\eqref{eq:lang}; 
\item the matrices $H_1, \ldots, H_k$ satisfy the histogram condition.
\end{enumerate}

\noindent
The first condition, under the exponential bound proved above, amounts to the membership in a regular language and 
can be imposed by a VAS in a standard way.
The second condition is a conjunction of $k$ histogram conditions, and again the conjunction can be realised in a standard way. 
We thus focus, from now on, only on showing that a VAS can check that its input is a histogram.

To this aim it will be profitable to have the following characterisation of histograms.
For an arbitrary $\matr{r}{c}$-matrix $H$, define the $\matr{(r-1)}{c}$-matrix $\prof{H}$ as:
\begin{align*}
\prof{H}(i,j) \quad \eqdef \quad \sum H_0(i, 1 \ldots j) - \sum H_0(i+1, 1\ldots j+1),
\end{align*}
where $H_0$ is an $\matr{r}{(c+1)}$-matrix which extends $H$ by the $(c+1)$-th zero column.
\begin{lemma}\label[lemma]{lem:profiles}
A nonnegative $\matr{r}{c}$-matrix $H$ is a histogram if, and only if
$\prof{H}$ is nonnegative and $\prof{H}(\_, c) = \zerovec$.
% \item $\profile{M}{(\sizeHist+1)}=\overrightarrow{0}$ and
% \item $\profile{M}{j}$ is nonnegative for all $1\leq j <\sizeHist+1.$
\end{lemma}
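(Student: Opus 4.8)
The plan is to unfold the definition of $\prof{H}$ in terms of the prefix sums $\sigma(i,j) \eqdef \sum H(i, 1\ldots j)$ (so that $\sigma(i,0)=0$ by convention), and then read off the two defining conditions of a histogram directly from the two conditions imposed on $\prof{H}$. Writing $H_0$ for the zero-column extension used in the definition, one has $\sum H_0(i, 1\ldots j) = \sigma(i,j)$ for $j \le c$ and $\sum H_0(i,1\ldots c+1) = \sigma(i,c)$, whence the single identity
\[
\prof{H}(i,j) \;=\; \sigma(i,j) - \sigma(i+1,j+1), \qquad 0 \le j \le c,
\]
where at the right boundary $j=c$ the appended zero column is exactly what makes $\sigma(i+1,c+1)=\sigma(i+1,c)$ well defined. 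Establishing this prefix-sum identity, including the boundary column, is the first step, and it is the bridge between the two worlds.

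Given the identity, I would split the equivalence into two independent layers. First, \emph{nonnegativity} of the columns $0 \le j < c$ says precisely that $\sigma(i,j) \ge \sigma(i+1,j+1)$ for all admissible $i,j$, which is verbatim the staircase inequality of Definition~\ref{def:hist}; note that the instance $j=0$ reads $0=\sigma(i,0)\ge\sigma(i+1,1)=H(i+1,1)$, so these boundary instances are exactly what force the lower-left triangle of $H$ to vanish. Second, the condition $\prof{H}(\_,c)=\zerovec$ says that $\sigma(i,c)=\sigma(i+1,c)$ for every $i$, i.e. that all row sums of $H$ coincide; this is precisely the \emph{degree} condition, the common value being the degree $s$. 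Thus each direction of the biconditional amounts to transcribing one set of (in)equalities into the other once the identity above is available.

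The one point that needs care — and which I expect to be the main obstacle — is the interaction of the two conditions at the right boundary and the bookkeeping of the degree. In the forward direction one must check that a genuine histogram makes the last profile column \emph{vanish} rather than merely stay nonnegative; this follows because the degree condition gives equal row sums, so $\sigma(i,c)-\sigma(i+1,c)=0$. In the backward direction one must recover a \emph{positive} degree $s$ from the profile conditions alone: equality of all row sums is immediate from the vanishing last column, and positivity of their common value follows as soon as $H \ne \zerovec$, the all-zero matrix being the only degenerate case, which is excluded or handled separately. Assembling these observations, together with the standing nonnegativity of $H$, yields the stated characterisation.
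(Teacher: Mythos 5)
Your proof follows essentially the same route as the paper's own argument: unfold $\prof{H}$ as the prefix-sum differences $\sum H(i,1\ldots j) - \sum H(i+1,1\ldots j+1)$, read nonnegativity of the profile as exactly the staircase inequalities of Definition~\ref{def:hist}, and read $\prof{H}(\_,c)=\zerovec$ as equality of all row sums, i.e.\ the degree condition. If anything you are slightly more careful than the paper's two-line proof, which passes silently over the $j=0$ boundary instances (the ones forcing the lower-left triangle of $H$ to vanish) and over the degenerate all-zero matrix, for which the profile conditions hold but no positive degree $s$ exists; your explicit flagging of both points is correct.
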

\begin{proof}%[Proof of \Cref{lem:profiles}]
Indeed, nonnegativeness of $\prof{H}$ is equivalent to saying that 
$$
\sum H(i, 1 \ldots j) \ \geq \ \sum H(i+1, 1\ldots j+1)
$$
for every $1\leq i<r$ and $0\leq j < c$;
moreover, $\prof{H}(\_, c) = \zerovec$ is equivalent to saying that 
$\sum H(i, 1 \ldots c)$ is the same for every $i = 1,\ldots, r$.
\end{proof}

For technical convenience, we always extend $\prof{H}$ with an additional very first zero column $\zerovec$; in other words,
we put $\prof{H}(\_, 0) = \zerovec$. Here is a formula relating two consecutive columns $\prof{H}(\_, j-1)$ and
$\prof{H}(\_, j)$ of $\prof{H}$ and two consecutive columns $H(\_, j)$ and $H(\_, j+1)$ of $H$,
\begin{align} \label{eq:inv}
%\hspace*{-0.5cm} 
\prof{H}(i, j) \ = \  \prof{H}(i, j-1) + H(i, j) - H(i+1, j+1),
\end{align}
that will lead our construction.

We now define a VAS of dimension $2(r-1)$ that reads consecutive columns $\vec w \in \N^r$ of an exponentially
bounded matrix and accepts if, and only if the matrix is a histogram.
According to the convention that $\prof{H}(\_, 0) = \zerovec$, all the $2(r-1)$ counters are initially set to 0.
Counters $1, \ldots, r-1$ of the VAS are used as a buffer to temporarily store the input;
counters $r, \ldots, 2(r-1)$ ultimately store the current column $\prof{H}(\_, j)$.
According to~\eqref{eq:inv}, the VAS obeys the following invariant: after $j$ steps,
\begin{align} \label{eq:realinv}
\prof{H}(i, j) \ = \ \text{counter}_i + \text{counter}_{r-1+i}.
\end{align}
Let $\setC \subseteq \N^r$ denote the exponential set of all column vectors that can appear in a histogram, as derived above.
For every $\vec w = (w_1, \ldots, w_r) \in \setC$, the VAS has a 'reading' transition
that adds $(w_1, \ldots, w_{r-1}) \in \N^{r-1}$ to its counters $1, \ldots, r-1$, and subtracts $(w_2, \ldots, w_r) \in \N^{r-1}$ from 
its counters $r, \ldots, 2(r-1)$ (think of $\vec w(i+1) = H(i+1, j+1)$ in the equation~\eqref{eq:inv}).
Furthermore, for every $i = 1, \ldots, r-1$ the VAS has a 'moving' 
transition that subtracts $1$ from counter $i$ and adds $1$ to counter
$r-1+i$, i.e., moves $1$ from counter $i$ to counter $r-1+i$.
(recall the '$+H(i, j)$' summand in the equation~\eqref{eq:inv}).
Observe that these transitions preserve the invariant~\eqref{eq:realinv}.

Relying on Lemma~\ref{lem:profiles} 
we claim that the so defined VAS reaches nontrivially (i.e., along a nonempty run) 
the zero configuration (all counters equal 0) if, and only if its input, treated
as an $\matr{r}{c}$-matrix $H$, is a histogram with all entries belonging to $\setC$.
In one direction, the invariant~\eqref{eq:realinv} assures that $\prof{H}$ is nonnegative and
the final zero configuration assures that $\prof{H}(\_, c) = \zerovec$.
In the opposite direction, if a histogram is input, the VAS has a run ending in the zero configuration.
The VAS is computable in exponential time (as the set $\setC$ above is so).

We have shown that, given $\targetMatr$ and $\setM$, one can effectively built a VAS which
admits reachability if, and only if there exists an $(\targetMatr, \setM)$-multihistogram.
The (exponential-blowup) reduction of the \pproductprob %\zeroextprob 
to the VAS reachability problem is thus completed.

% !TEX root = main.tex

\section{\ptime decision procedures} \label[section]{sec:p}

In this section we prove Theorem~\ref{thm:p}, namely we provide polynomial-time decision procedures
for the \xpproductprob{$\X$}, where $\X \in \{\Z, \Q, \Qplus\}$.
The most interesting case $\X = \Qplus$ is treated in Section~\ref{sec:Qplus}.
The remaining ones are in fact special cases of a more general result, shown in Section~\ref{sec:pp},
that applies to an arbitrary commutative ring.

\subsection{%\ptime decision procedure for 
$\X = \Qplus$} \label{sec:Qplus}

%Below we deal with the case $\X = \Qplus$; note however that exactly the same argument works for $\X = \Rplus$.

We start by noticing that the whole development of (multi)-his\-to\-grams in Section~\ref{sec:hist} is not at all specific
for $\X=\N$ and works equally well for $\X = \Qplus$.
It is enough to relax the definition of histogram: instead of nonnegative integer matrix, let histogram be now a
\emph{nonnegative rational} matrix satisfying exactly the same conditions as in Definition~\ref{def:hist} in Section~\ref{sec:hist}.
In particular, the degree of a histogram is now a nonnegative rational.
Accordingly, one adapts the \zeroextprob and considers a sum of 0-extensions of $\setM$ \emph{multiplied by nonnegative rationals}.
%To keep the characterizations of Section~\ref{sec:hist} valid for rational matrices,
%in particular \Cref{lem:histogram} and (a suitable adaptation of) \Cref{lem:simplehistsum}, we redefine simple histogram as an integer simple histogram of Section~\ref{sec:hist}
%multiplied by any nonnegative rational number.
The same relaxation as for histograms we apply to multihistograms, and in the definition of the latter (cf.~the language~\eqref{eq:lang}
at the end of Section~\ref{sec:hist}) we consider
nonnegative-rational solutions of linear equations instead of nonnegative-integer ones.
With these adaptations, the \xpproductprob{$\Qplus$} is equivalent to the following decision problem
(whenever a risk of confusion arises, we specify explicitly which matrices are integer ones, and which rational ones):

\decproblem{\xhistprob{$\Qplus$}}
{a finite set $\setM$ of integer matrices, and an integer matrix $\targetMatr$, all of the same row dimension $d$}
{does there exist a rational $(\targetMatr, \setM)$-multihistogram?}

\noindent
From now on we concentrate on the polynomial-time decision procedure for this problem.
We proceed in two steps. 
First, we define \emph{homogeneous linear Petri nets}, a variant of
Petri nets generalising continuous PNs~\cite{serge}, and show how to solve its reachability problem 
by $\Qplus$-solvability of a slight generalisation of linear equations (linear equations with implications),
following the approach of~\cite{serge-compl}.
Next, using a similar construction as in \cref{sec:upperbound}, combined with the above characterisation of 
reachability, we encode \xhistprob{$\Qplus$} as a system of linear equations with implications.

\para{Homogeneous linear Petri nets}
%We start by explaining the intuitive idea behind the model of linear PNs.
%In a \emph{plain} PN, a transition rule consists of two nonnegative integer vectors $\vec v^-, \vec v^+ \in \N^d$ and its execution
%first subtracts $\vec v^-$ and then adds $\vec v^+$. 
%
%In a \emph{continuous PN}, a transition rule consists again of $\vec v^-, \vec v^+ \in \N^d$, but its
%execution is more relaxed: it first subtracts $c \cdot \vec v^-$ and then adds $c \cdot \vec v^+$,
%for some arbitrarily chosen nonnegative rational $c\in\Qplus$.
%
%In a \emph{linear} Petri net, the set of transition executions is defined, more generally than in continuous PN, 
%by homogeneous systems of linear equations.
%
%\begin{definition}
A \emph{homogeneous linear Petri net} (homogeneous linear PN) of dimension $d$ is a finite set of homogeneous\footnote{
If non-homogeneous systems were allowed, the model would subsume (ordinary) Petri nets.
}
 systems of linear equations 
$\V = \{\setU_1, \ldots, \setU_m\}$, called \emph{transition rules}, all over the same $2d$ variables $x_1, \ldots, x_{2d}$.
The transition rules determine a transition relation $\longrightarrow$ between configurations, which are nonnegative rational vectors 
$\vec{c}\in (\Qplus)^\dimension$, as follows:
there is a transition $\vec{c} \longrightarrow \vec{c}'$ if, for some $i \in \{1, \ldots, m\}$ and 
$\vec v \in \sol{\Qplus}{\setU_i}$,
the vector $\vec c \ - \ \pi_{1\ldots d}(\vec v)$ is still a configuration, and
$$\vec{c}' \ = \ \vec{c} - \pi_{1\ldots d}(\vec v) \ + \ \pi_{d+1\ldots 2d}(\vec v).$$
(The vectors $ \pi_{1\ldots d}(\vec v)$ and $\pi_{d+1\ldots 2d}(\vec v)$ are projections of $\vec v$ on respective coordinates.)
The reachability relation
$\vec{c} \longrightarrow^* \vec{c'}$ holds, if there is a sequence of transitions (called \emph{a run}) from $\vec c$ to $\vec c'$.

%We say that linear PN is \emph{homogeneous} \emph{(linear PN)} if all systems of equations $\{\setU_1, \ldots, \setU_m\}$ are homogeneous systems.
%Otherwise, the linear PN is non-homogeneous.
%\end{definition}

A class of \emph{continuous PN} \cite{serge} is a subclass of homogeneous linear PN, where every
system of linear equations $\setU_i$ has a 1-di\-men\-sio\-nal solution set of the form
$\setof{c\vec{v}}{c\in\Qplus}$, for some fixed $\vec v\in \N^{2d}$.
%transitions can be provided in the form $a\vec{v}$ where $a\in \Qplus$ and $\vec{v}$ is a $2\dimension$-dimensional vector.

% \noindent
% In a \emph{continuous PN}, a transition consists again of $\vec v^-, \vec v^+ 
% \in \N^d$, but its
% execution is more relaxed: it first subtracts $c \cdot \vec v^-$ and then adds $c \cdot \vec v^+$,
% for some arbitrarily chosen nonnegative rational $c\in\Qplus$.
% Thus transition executions are determined by the set 
% {\malenkie\[
% \setof{c \cdot (\vec v^-, \vec v^+)}{c\in\Qplus}\subseteq (\Qplus)^{2d}.
% \]}
% \emph{Linear PNs} generalize continuous PNs in that a transition consists of 
%  a system $\setU$ of linear equations over $2d$ variables, and transition
% executions are determined by the nonnegative rational solutions of $\setU$:
% {\malenkie\[
% \sol{\Qplus}{\setU} \subseteq (\Qplus)^{2d},
% \]}
% i.e., a transition execution first subtracts $\pi_{1\ldots d}(\vec v)$ the projection of a solution $\vec v \in \sol{\Qplus}{\setU}$
% onto the first $d$ coordinates,  and then adds $\pi_{d+1\ldots 2d}(\vec v)$ the projection of the same solution onto the last
% $d$ coordinates.
\para{Linear equations with implications}
A \emph{\semieq} is a finite set of linear equations, all over the same variables, 
plus a finite set of implications of the form
\[
x > 0 \implies y > 0,
\]
where $x, y$ are variables appearing in the linear equations.
The solutions of a \semieq are defined as usually, but additionally they must satisfy all implications.
The $\Qplus$-solvability problem asks if there is a nonnegative-rational solution.
In~\cite{serge-compl} (Algorithm 2) it has been shown (within a different 
notation) how to solve the problem in \ptime; another proof is derivable from~\cite{BlondinH17}, where a polynomial-time
fragment of existential FO($\Q$, + ,<) has been identified that captures \semieq:
\begin{lemma}[\cite{serge-compl,BlondinH17}]\label{lem:serge-compl}
The $\Qplus$-solvability problem for \semieqs is decidable in \ptime.
\end{lemma}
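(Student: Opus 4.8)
The plan is to give a self-contained polynomial-time algorithm based on iterated \emph{support saturation}, in the spirit of Algorithm~2 of~\cite{serge-compl}. Write $\setU$ for the linear-equation part of the \semieq, over variables $x_1, \ldots, x_n$, and regard each implication $x > 0 \implies y > 0$ as a directed edge $x \to y$. The crucial observation is that whether a solution satisfies the implications depends only on its \emph{support}: a vector $\vec v \in \sol{\Qplus}{\setU}$ is a solution of the \semieq if, and only if, its support $\support{\vec v}$ is \emph{forward-closed} under the edges, i.e.~$x \in \support{\vec v}$ and $x \to y$ imply $y \in \support{\vec v}$.

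The key structural fact I would establish first is a convexity lemma: for any linear system $\setU$ (possibly with some variables pinned to $0$), if $\sol{\Qplus}{\setU} \neq \emptyset$ then it contains a single \emph{maximal-support} solution, whose support is exactly the union $T$ of the supports of all its members. This follows because $\sol{\Qplus}{\setU}$ is convex: averaging finitely many solutions, one witnessing each variable of $T$ being positive, yields a single solution of support $T$. Moreover $T$ is computable in polynomial time, since $x_i \in T$ iff $\setU$ has a nonnegative-rational solution with $x_i \geq 1$, which is a single linear-programming feasibility query.

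With these in hand I would run the following monotone fixpoint. Maintain a set $Z$ of variables forced to $0$, initially empty. Repeatedly compute the maximal support $T$ of $\sol{\Qplus}{\setU}$ under the extra constraints $\{x = 0 : x \in Z\}$; then for every edge $x \to y$ with $x \in T$ but $y \notin T$, add $x$ to $Z$, because forcing $x$ positive would demand $y$ positive, which is impossible. Since adding constraints only shrinks $T$, the set $Z$ grows monotonically and the loop halts after at most $n$ rounds, each round costing $O(n)$ feasibility queries; hence the whole procedure is polynomial. The algorithm answers \textbf{yes} precisely when the final restricted system is still $\Qplus$-feasible.

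Correctness is where I expect the real work, and I would argue it through two invariants. For soundness, I would show by induction that every variable placed in $Z$ is $0$ in \emph{every} solution of the \semieq: if some solution had $x > 0$ with $x$ freshly added because $x \in T$, $y \notin T$, then the implication forces $y > 0$, yet that solution lies in $\sol{\Qplus}{\setU}$ with $Z$ pinned, so its support is contained in $T$, contradicting $y \notin T$. For completeness, at the fixpoint take the maximal-support solution $\vec v^{*}$, of support exactly the final $T$; every edge $x \to y$ with $x \in \support{\vec v^{*}} = T$ must, by the stopping condition, satisfy $y \in T = \support{\vec v^{*}}$, so $\vec v^{*}$ meets all implications and is a genuine solution of the \semieq. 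Thus feasibility of the final restricted system is equivalent to solvability of the \semieq. The main obstacle is precisely pinning down the convexity/maximal-support lemma and verifying that the ``shrinking $T$, growing $Z$'' fixpoint is simultaneously monotone (for the polynomial bound) and exact (neither discarding a valid solution nor accepting a spurious one).
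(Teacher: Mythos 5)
Your overall architecture is sound, and it is in fact a reconstruction of the argument the paper merely cites rather than proves: the paper points to Algorithm~2 of~\cite{serge-compl} and to the logical fragment of~\cite{BlondinH17}, and your support-saturation fixpoint is exactly the style of the former. The reduction of the implications to a forward-closure condition on supports, the convexity/maximal-support lemma (averaging one witness per variable of $T$), the termination bound (note $T \cap Z = \emptyset$ since variables in $Z$ are pinned to $0$, so every non-terminating round adds a genuinely new variable to $Z$, giving at most $n$ rounds), and both correctness directions --- soundness by induction on additions to $Z$, completeness via the maximal-support witness at the fixpoint --- are all correct as you state them.

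There is, however, one concrete step that fails as written: the claim that $x_i \in T$ iff the restricted system has a nonnegative-rational solution with $x_i \geq 1$. This scaling argument is valid only for \emph{homogeneous} systems, whereas the \semieqs this lemma is applied to in the paper are non-homogeneous (the constructed system $\S$ contains constraints such as $\vec c_{2i}-\vec c_{2i-1}\in \sol{\Qplus}{\setU_i}$ with nonzero right-hand sides). A counterexample: for the single equation $2x + y = 1$ with $x, y \geq 0$, solutions with $x > 0$ exist, but none with $x \geq 1$; your query would wrongly exclude $x$ from $T$. A too-small $T$ breaks both of your invariants --- it can trigger spurious additions to $Z$ (ruining soundness), and the completeness argument needs the support of $\vec v^{*}$ to be the \emph{true} union of supports. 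The repair is standard and keeps everything polynomial: for each $i$, solve the linear program maximizing $x_i$ over the restricted feasible region and put $x_i \in T$ iff the optimum is positive or the program is unbounded. With that substitution your proof is complete and matches the cited arguments.
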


\noindent
Due to~\cite{serge-compl},
the reachability problem for continuous PNs reduces to the $\Qplus$-solvability of \semieqs.
We generalise this result and prove the reachability relation of a homogeneous linear PN to be 
effectively described by a \semieq:
\begin{lemma} \label{lem:LPN}
Given a homogeneous linear PN $\V$ of dimension $d$ (with numbers encoded in binary) one can compute in 
polynomial time a \semieq whose 
$\Qplus$-solution set, projected onto a subset of $2d$ variables, describes the reachability relation of $\V$.
\end{lemma}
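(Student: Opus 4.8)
The plan is to characterise the reachability relation of a homogeneous linear PN $\V$ by a \semieq, generalising the continuous-PN construction of~\cite{serge-compl}. The central idea is that, because all transition rules are \emph{homogeneous} systems, the effect of a run is insensitive to the multiplicity with which each rule is used, and can be captured by accounting only for the \emph{total} solution vector contributed by each rule, together with qualitative information about \emph{which} rules are fired (a support). Concretely, I would introduce, for each transition rule $\setU_i$, a vector variable $\vec y_i \in (\Qplus)^{2d}$ intended to be the sum of all the solutions of $\setU_i$ used along the run; since $\setU_i$ is homogeneous and its $\Qplus$-solution set is closed under addition and nonnegative scaling, any such sum is itself a solution of $\setU_i$, so I can constrain $\vec y_i \in \sol{\Qplus}{\setU_i}$ by simply reusing the equations of $\setU_i$. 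The net displacement from the source configuration $\vec c$ to the target $\vec c'$ is then
\[
\vec c' - \vec c \ = \ \sum_{i=1}^m \Big( \pi_{d+1\ldots 2d}(\vec y_i) - \pi_{1\ldots d}(\vec y_i) \Big),
\]
which is a system of $d$ linear equations over the variables $\vec c, \vec c', \vec y_1, \ldots, \vec y_m$.

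\textbf{First} I would establish the easy direction: any run induces such a solution, by summing the per-step solution vectors rule by rule. The interesting direction is that any $\Qplus$-solution of the resulting system is \emph{realisable} by an actual run, and here the intermediate-configuration nonnegativity constraints are exactly what the implications must encode. The obstacle is that a naive total-displacement equation ignores the requirement that every intermediate configuration stay nonnegative; a candidate solution might force the run through a negative configuration. Following~\cite{serge-compl}, the remedy is to record, for each coordinate and each rule, whether it is \emph{consumed} or \emph{produced} and in what order rules may be scheduled, and to add implications stating that a coordinate may drop to a dangerously low level only if some rule replenishing it has been fired (i.e.\ has strictly positive contribution). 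The key structural fact one exploits is that with fractional firing one has great freedom to reschedule and rescale: a coordinate that is eventually positive can be primed by firing an arbitrarily small fraction of the appropriate producing rule first, so the correct realisability condition is purely qualitative and expressible by implications of the form $x > 0 \implies y > 0$.

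\textbf{The main obstacle} is precisely formulating and proving the realisability direction: I expect the proof to proceed by a scheduling/ordering argument showing that whenever the support of the fired rules is ``consistent'' in the sense captured by the implications, the fractions can be ordered and scaled into a genuine run keeping all intermediate configurations in $(\Qplus)^d$. This is the analogue of the firing-sequence construction for continuous Petri nets, and the homogeneity of the rules is what makes the argument go through for the more general model, since it guarantees that partial firings remain within the same solution cone. Once the \semieq is assembled, its size is polynomial in $\V$ (each rule contributes a fixed block of equations plus a bounded number of implications), and by \Cref{lem:serge-compl} its $\Qplus$-solvability is decidable in \ptime; projecting onto the $2d$ variables $\vec c, \vec c'$ yields exactly the reachability relation, which is what the lemma asserts.
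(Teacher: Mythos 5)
You correctly identify the skeleton that the paper also uses: aggregate, for each rule $\setU_i$, the solutions fired along the run into a single vector $\vec y_i \in \sol{\Qplus}{\setU_i}$ (legitimate because the $\Qplus$-solution set of a homogeneous system is closed under addition), write the displacement as one system of $d$ linear equations, and use implications $x > 0 \implies y > 0$ for realisability. But there is a genuine gap exactly at the point you label ``the main obstacle'': your claim that realisability is \emph{purely qualitative} and expressible by implications relating the supports of the fired rules to the endpoint configurations $\vec c, \vec c'$ is false as stated. Take $d = 3$ with rule $A$ consuming coordinate $1$ and producing coordinate $2$, and rule $B$ consuming coordinate $2$ and producing coordinate $3$: then $(1,0,0) \longrightarrow^* (0,0,1)$ by firing $A$ and then $B$, yet the aggregated solutions satisfy $\vec{y}_B^-(2) > 0$ while $\vec c(2) = 0$, and $\vec{y}_A^+(2) > 0$ while $\vec c'(2) = 0$, so any implication scheme tying consumption (resp.\ production) supports to positivity of the source (resp.\ target) rejects this reachable pair. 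The condition ``a coordinate may be consumed only if some rule replenishing it has been fired'' is inherently an ordering condition --- a firing-set fixpoint in the sense of~\cite{serge-compl} --- and flat implications among your variables do not express it; your appeal to a ``scheduling/ordering argument'' names the difficulty without resolving it.

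The paper closes this gap with two ideas absent from your sketch. First, the implication-based condition is only claimed to be \emph{sufficient} (Lemma~\ref{lem:serge12lin}, imported from Lemma 12 of~\cite{serge-compl} via the simulation of homogeneous linear PNs by continuous PNs in Lemma~\ref{lem:conEqlin}), and its implications refer to positivity of \emph{auxiliary} configurations $\vec i', \vec f'$, not of the endpoints. Second, Lemma~\ref{lem:serge20lin} turns this into an exact characterisation: $\vec i \longrightarrow^* \vec f$ iff there exist $\vec i'$ reachable from $\vec i$ in at most $d$ steps and $\vec f'$ from which $\vec f$ is reachable in at most $d$ steps, with the sufficient condition holding between $\vec i'$ and $\vec f'$. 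The crucial counting argument --- which makes your ``priming with arbitrarily small fractions'' intuition precise --- is that a run can be rescheduled so that each of its first few steps fires a small fraction that strictly increases the number of nonzero coordinates; hence at most $d$ steps reach a configuration of maximal support, symmetrically backwards from $\vec f$, and the small fractions $\vec v_i \le \vec u_i$ can be subtracted from the aggregated solutions while staying in the solution set. The two bounded prefix/suffix runs are then encoded by \emph{purely linear} constraints (fresh nonnegative variables for at most $d$ intermediate configurations and per-step solutions), and only conditions 2 and 3 of Lemma~\ref{lem:serge12lin} need implications. Without this bounded-length sandwich, or some equivalent flattening of the firing-set fixpoint, the \semieq you assemble would be sound but incomplete, and the projection onto $\vec c, \vec c'$ would strictly under-approximate the reachability relation.
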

\noindent
We return to the proof of this lemma, once we first use it in the decision procedure for our problem.

\para{Polynomial-time decision procedure}
Now, we are ready to describe a decision procedure for the
\xhistprob{$\Qplus$}, by a polynomial-time reduction to the 
$\Qplus$-solvability problem of \semieqs.

Fix an input to the \xhistprob{$\Qplus$}, i.e., $\targetMatr$ and $\setM = \{M_1, \ldots, M_k\}$.
Analogously as in~\eqref{eq:C} in Section~\ref{sec:hist} we put for succinctness, for $\vec a\in\Z^d$,
\[
\setC_{\vec a} \quad := \quad \sol{\Qplus}{\setU_{\setM, \vec a}} \quad  \subseteq \quad (\Qplus)^r 
\]
to denote the set of all nonnegative-rational solutions of the system $\setU_{\setM, \vec a}$ of linear equations determined by the matrix
\[
\Big[ M_1 \mid \ldots \ \mid M_k \Big] 
\]
and the column vector $\vec a$. Recall the language~\eqref{eq:lang}:
\begin{align} \label{eq:langQplus}
%\hspace*{-0.5cm} 
(\setC_{\zerovec})^* \ \setC_{\targetMatr(\_, 1)} \  
(\setC_{\zerovec})^* \ \setC_{\targetMatr(\_, 2)} \ \ldots \ 
(\setC_{\zerovec})^* \ \setC_{\targetMatr(\_, n)} \ 
(\setC_{\zerovec})^*,
\end{align}
where $n$ is the column dimension of $\targetMatr$.
Our aim is to check existence of an $(\targetMatr, \setM)$-multihistogram, i.e., of a family 
$\hist = \{H_1, \ldots, H_k\}$ of nonnegative-rational matrices, such that the following conditions are satisfied:
\begin{enumerate}
\item the word of $\hist$ belongs to the language~\eqref{eq:langQplus}; 
\item the matrices $H_1, \ldots, H_k$ satisfy the histogram condition.
\end{enumerate}
\noindent $\hist$ has 
$r = r_1 + \ldots + r_k$ rows, where $r_\cokol$ is the row dimension of $H_\cokol$, equal to the column
dimension of $M_\cokol$ for $\cokol = 1,\ldots, k$.
Given $\targetMatr, \setM$, we construct in polynomial time a \semieq $\S$ which is solvable if, and only if
some $(\targetMatr, \setM)$-multihistogram exists.
The solvability of $\S$ is decidable in \ptime according to~\cref{lem:serge-compl}.

The idea is to characterise an $(\targetMatr, \setM)$-multihistogram by a sequence of runs in a homogeneous linear PN interleaved by
single steps described by non-homogeneous systems of linear equations;
then, using \cref{lem:LPN}, % combined with some tricks 
the sequence is translated to a \semieq.
Conceptually, the construction is analogous to the construction of a VAS in Section~\ref{sec:upperbound}.
%We only sketch the construction here, and leave the missing details to be filled in by an interested reader.
%Let $r = r_1 + \ldots + r_k$ be the sum of the row dimensions of all matrices $H_1, \ldots, H_k$.
We define a homogeneous linear PN $\V_\zerovec$, recognizing the language $(\setC_\zerovec)^*$.
(Think of $\V_\zerovec$ as if it reads consecutive nonnegative-rational column vectors belonging to $\setC_\zerovec$.)
The dimension of $\V_\zerovec$ is 
$$2(r-k) \ = \ 2(r_1 - 1) + \ldots + 2(r_k - 1),$$ i.e., $\V_\zerovec$ has $2(r_i-1)$ counters corresponding to each $H_i$.
%$(\bigcup_{\vec{a}\in \{\zerovec,\targetMatr(\_, 1)\ldots \targetMatr(\_, n)\}}\setC_{\vec{a}})^*$ 
%

%\begin{itemize}
% \item 
Concerning the 'reading' transitions,
the construction is essentially the same, except that we do not restrict the input to the finite set
$\setC$ as in Section~\ref{sec:upperbound}, but we allow for all infinitely many 
solutions $\setC_\zerovec$ of the homogeneous system of linear equations $\setU_{\setM, \zerovec}$
as inputs.
%where $\vec{a}\in \{\zerovec,\targetMatr(\_, 1)\ldots \targetMatr(\_, n)\}$;
Moreover, we deal with all histogram conditions for $H_1, \ldots, H_k$ simultaneously.
Thus, the 'reading' transition rule of $\V_\zerovec$ is described by 
a homogeneous system $\setU_\zerovec$ over $4(r-k)$ variables, % $z_1, \ldots, z_{4(r-k)}$, 
half of them describing subtraction and half describing addition in a transition,
derived from $\setU_{\setM, \zerovec}$ as follows.
Let $x_{1}, \ldots, x_{r_j}$ be the variables of $\setU_{\setM, \zerovec}$ corresponding to rows of some $H_j$.
As $\V_\zerovec$ has $2(r_j-1)$ dimensions corresponding to $H_j$, 
the system $\setU_\zerovec$ has $4(r_j - 1)$ corresponding variables $z_1, \ldots, z_{4(r_j-1)}$,  
half of them, say $z_1, \ldots, z_{2(r_j-1)}$, describing subtraction and the other half $z_{2(r_j -1) +1}, \ldots, z_{4(r_j-1)}$ addition.
Imitating the construction of a VAS in Section~\ref{sec:upperbound},
the system $\setU_\zerovec$ is obtained by adding to $\setU_{\setM, \zerovec}$, for every $j$, the following equations:
\begin{align*}
\big(z_{2(r_j-1)+1}, \ldots, z_{3(r_j-1)}\big) \ & = \ \big( x_1, \ldots, x_{r_j-1} \big) % && \text{(addition)} 
\end{align*}
describing addition in dimensions $1, \ldots, r_j-1$; and the following equations:
\begin{align*}
\big(z_{(r_j-1)+1}, \ldots, z_{2(r_j-1)}\big) \ & = \ \big( x_2, \ldots, x_{r_j} \big) % && \text{(subtraction)}
\end{align*}
describing subtraction in dimensions $(r_j-1)+1, \ldots, 2(r_j-1)$;
and then by eliminating the (redundant) variables $x_1, \ldots, x_{r_j}$.
%
%Let $\vec{y}$ be an $r$ dimensional vector of variables from $\setU_{\setM, 
%\zerovec}$ that stand for consecutive entries in a column of $\hist$. For 
%$i\leq r-k$ we have that $\vec{z}(i)=\vec{y}(i)$ and 
%$\vec{z}(i+r-k)=-\vec{y}(i+1).$
%
%\noindent As previously the first $r-k$ dimensions is used as a buffer, when the second $r-k$ dimensions is used to store 
%current $\Delta(\_,j)$ as in \cref{eq:inv}, thus we need

Concerning the 'moving' transitions,
% 
% describing the following transitions, one per every $\vec w \in \setC_\zerovec$: 
% for every projection $(w_1, \ldots, w_{r_i})$ of $\vec w$ onto a block of coordinates corresponding to a single histogram $H_i$,
% add $(w_1, \ldots, w_{r_i-1}) \in (\Qplus)^{r_i-1}$ to its counters $1, \ldots, r_i-1$ corresponding to $H_i$, 
% and subtracts $(w_2, \ldots, w_{r_i}) \in (\Qplus)^{r_i-1}$ from its counters $r_i, \ldots, 2(r_i-1)$ corresponding to $H_i$.
%
%Observe that transitions in $\setC_\zerovec$ form a linear set; and due 
%to \cref{eq:langQplus} the more problematic non-homogeneous systems of 
%equations defining transitions, 
%corresponding to columns $\setC_{\targetMatr(\_, i)}$,
%can be fired only once (for each $\targetMatr(\_, i)$); this restriction allows solving this particular reachability question using \cref{lem:LPN} for linear PNs. 
%\item 
there are $r-k$ of them in $\V_\zerovec$,
%($r_i -1$ 'moving' transitions per every histogram $H_i$)
each one described by a separate system of homogeneous linear equations $\setW_i$, 
% for $i = 1, \ldots, r-k$, 
consisting of just one equation of the form (using the same indexing as above)
\[
z_{\cokol} \ = \ z_{3(r_j-1) + \cokol}, \qquad \text{ where } 1\leq \cokol \leq r_j-1, \ 1 \leq j \leq k.
\]
In addition, both in $\setU_\zerovec$ and in  $\setW_i$, all variables $z_\cokol$ not mentioned above are equalised to $0$.
%Precisely, $\setW_i$ is described by a vector of variables $\vec{z}$ such that 
%$\vec{z}(\cokol)=0$ for $\cokol\not\in\{ i, i+r-k\}$ and
%$-\vec{z}(i)=\vec{z}(i+r-k).$
%\end{itemize}

%\noindent
Summing up, $\V_\zerovec \eqdef \{\setU_{\zerovec},
%\setU_{\targetMatr(\_, 1)}\ldots \setU_{\targetMatr(\_, n)}, 
\setW_1, \ldots, \setW_{r-k}\}$.
 By Lemma~\ref{lem:LPN} one can compute in polynomial time a \semieq $\S_\zerovec$ 
 such that the projection $A_\zerovec$ of $\sol{\Qplus}{\S_\zerovec}$ to some 
$2\cdot 2(r-k)$ of its variables describes the reachability relation
 of $\V_\zerovec$.
 
According to~\eqref{eq:langQplus}, we aim at constructing a \semieq $\S$ whose solvability is equivalent to
existence of the following sequence of $n+1$ runs  of $\V_\zerovec$:
\begin{align} \label{eq:langQplus1}
%\hspace*{-0.6cm}
\zerovec \xrightarrow{\setC_{\zerovec}^*} \vec c_1 \quad 
%\xrightarrow{\setC_{\targetMatr(\_, 1)}} 
\vec c_2 \xrightarrow{\setC_{\zerovec}^*}\vec c_3 \ 
%\xrightarrow{\setC_{\targetMatr(\_, 2)}}
\ldots \ 
%\xrightarrow{\setC_{\zerovec}^*}
\vec c_{2n-2} \xrightarrow{\setC_\zerovec^*}
\vec c_{2n-1} \quad %\xrightarrow{\setC_{\targetMatr(\_, n)}}
\vec c_{2n} 
\xrightarrow{\setC_{\zerovec}^*}\zerovec
\end{align}
where the relation between every ending configuration $\vec c_{2i-1}$ and every next starting configuration $\vec c_{2i}$
is determined by $\setC_{\targetMatr(\_, i)}$.
The required \semieq $\S$ is constructed as follows: we introduce $2(r-k)$ variables per each intermediate configuration
$\vec c_i$ ($\vec c_0=\vec c_{2n+1}=\zerovec$), and impose the constraints:
\begin{enumerate}
\item there is a run from $\vec c_{2i}$ to $\vec c_{2i+1}$ in $\V_{\zerovec}$, i.e., $(\vec c_{2i}, \vec c_{2i+1})$ 
belongs to the projection $A_\zerovec$ of $\sol{\Qplus}{\S_\zerovec}$;
\item $\vec c_{2i}-\vec c_{2i-1}\in \sol{\Qplus}{\setU_i}$, where the (nonhomogeneous) system $\setU_i$ is obtained 
from  $\setU_{\setM, \targetMatr(\_, i)}$ similarly as $\setU_\zerovec$ above.
\end{enumerate}
$\S$ is solvable iff some $(\targetMatr, \setM)$-multihistogram exists.
 
\medskip

%\begin{proof}
\para{Proof of Lemma~\ref{lem:LPN}}
We start by observing that the reachability of a homogeneous PN can be 
simulated by a continuous PN:
\begin{lemma}\label{lem:conEqlin}
For a given homogeneous PN $\cal V$ one can construct a continuous PN 
$\calN$such 
that $\vec{c}\to \vec{c'}$ in $\cal V$ iff $\vec{c}\to^* \vec{c'}$ in $\calN$. 
\end{lemma}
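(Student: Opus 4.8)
The plan is to simulate a single transition rule $\setU_i$ of the homogeneous PN $\cal V$ by a little gadget of continuous-PN transitions, so that one step of $\cal V$ becomes a short run of $\calN$. Recall that a transition of $\cal V$ fires some $\vec v \in \sol{\Qplus}{\setU_i}$, subtracting $\pi_{1\ldots d}(\vec v)$ and adding $\pi_{d+1\ldots 2d}(\vec v)$; the obstacle is that the solution set $\sol{\Qplus}{\setU_i}$ of a homogeneous system is in general a polyhedral cone of dimension greater than $1$, whereas each transition of a continuous PN is forced, by definition, to have a \emph{one-dimensional} solution cone $\setof{c\vec u}{c\in\Qplus}$. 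So the whole difficulty is to reconstruct an arbitrary ray of a high-dimensional cone as a nonnegative-rational combination of finitely many fixed rays.

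The key observation is that the cone $\sol{\Qplus}{\setU_i}$ is finitely generated: there exist generators $\vec u_1, \ldots, \vec u_{p} \in \sol{\Qplus}{\setU_i}$ (which, after clearing denominators, may be taken in $\N^{2d}$) such that every solution is a nonnegative-rational combination $\vec v = \sum_{l} c_l \vec u_l$. First I would compute such a generating set; since $\setU_i$ is homogeneous this is exactly a cone-generation (Minkowski) computation. Then, for each generator $\vec u_l$, I introduce into $\calN$ a continuous-PN transition whose one-dimensional solution set is $\setof{c\vec u_l}{c\in\Qplus}$, subtracting $c\cdot\pi_{1\ldots d}(\vec u_l)$ and adding $c\cdot\pi_{d+1\ldots 2d}(\vec u_l)$. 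Firing these generator transitions with fractional amounts $c_1, \ldots, c_p$ in succession produces the net effect $-\pi_{1\ldots d}(\vec v) + \pi_{d+1\ldots 2d}(\vec v)$, which is precisely the effect of the single $\cal V$-step firing $\vec v$. I would do this for every transition rule $\setU_i$, taking $\calN$ to be the union of all the resulting generator transitions.

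For correctness I must argue both inclusions. The ``only if'' direction is the easy one: if $\vec c \to \vec c'$ in $\cal V$ via $\vec v = \sum_l c_l \vec u_l$, then firing the generator transitions in the order $\vec u_1, \ldots, \vec u_p$ realises $\vec c \to^* \vec c'$ in $\calN$; here I rely on the fact that the intermediate configurations stay nonnegative. This nonnegativity is the point that needs care, and it is the main obstacle: a naive ordering of the generator firings might drive some coordinate temporarily negative, even though the overall effect is legal in $\cal V$. I expect to handle this by firing all the \emph{additive} parts before the parts that would decrease a coordinate, or more robustly by interleaving the generators in tiny fractional increments so that each coordinate moves monotonically along an admissible path; because every generator is itself a solution of the homogeneous system, scaling and splitting its firing preserves the effect, and continuous (fractional) firing gives exactly the flexibility needed to keep all intermediate configurations in $(\Qplus)^d$.

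For the ``if'' direction, suppose $\vec c \to^* \vec c'$ in $\calN$ using generator transitions with fractional amounts $c_1, \ldots, c_p$ (collecting the total amount fired of each generator $\vec u_l$ along the run). Setting $\vec v := \sum_l c_l \vec u_l$ yields a vector in the cone $\sol{\Qplus}{\setU_i}$ whose effect equals the net effect of the $\calN$-run, so $\vec c \to \vec c'$ is a single step of $\cal V$ — provided the relevant projection $\vec c - \pi_{1\ldots d}(\vec v)$ is nonnegative, which follows since the final configuration $\vec c'$ is a configuration and the additive part only increases coordinates. Since each $\setU_i$ is homogeneous its cone is generated by polynomially-boundedly many generators of polynomially-bounded bit-size, so $\calN$ is computable within the size bounds required downstream, completing the construction.
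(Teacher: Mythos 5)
Your construction of $\calN$ is exactly the paper's: by finite generation of the cone $\sol{\Qplus}{\setU_i}$ (the paper cites Prop.~3 of the ``taming'' paper), take generators $\vec u_1,\ldots,\vec u_p$ and turn each ray $\setof{c\vec u_l}{c\in\Qplus}$ into a continuous-PN transition. In the forward direction, however, the nonnegativity worry you spend a paragraph hedging about is vacuous: if $\vec v=\sum_l c_l\vec u_l$ is fireable as a single $\cal V$-step from $\vec c$, then $\vec c-\pi_{1\ldots d}(\vec v)\geq\zerovec$, and since $\pi_{1\ldots d}(\vec v)=\sum_l c_l\,\pi_{1\ldots d}(\vec u_l)$ with all summands nonnegative, every partial subtraction is dominated by the full one; hence firing the generators sequentially \emph{in any order} keeps all intermediate configurations above $\vec c-\pi_{1\ldots d}(\vec v)\geq\zerovec$. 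No reordering or fractional interleaving is needed.

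The genuine gap is your ``if'' direction, which tries to compress an entire $\calN$-run into a \emph{single} $\cal V$-step. This fails twice over. First, a run of $\calN$ may mix generators of different cones $\sol{\Qplus}{\setU_i}$, so the total $\vec v=\sum_l c_l\vec u_l$ need not lie in any one cone; your phrase ``yields a vector in the cone $\sol{\Qplus}{\setU_i}$'' silently assumes all fired generators come from one rule. Second, your nonnegativity inference is a non sequitur: $\vec c'=\vec c-\pi_{1\ldots d}(\vec v)+\pi_{d+1\ldots 2d}(\vec v)\geq\zerovec$ does not imply $\vec c-\pi_{1\ldots d}(\vec v)\geq\zerovec$ (take $d=1$, $\vec c=(1)$, subtract $2$, add $2$). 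Indeed the statement you aim at is false: with one rule of effect $(-1,+1)$ requiring input $(1,0)$ (cone generated by $(1,0,0,1)$) and another of effect $(+1,-2)$ requiring input $(0,2)$ (cone generated by $(0,2,1,0)$), the $\calN$-run $(1,2)\to(0,3)\to(1,1)$ has net effect $(0,-1)$, realisable by no single rule. What the lemma needs --- and what the paper's one-line converse proves --- is much weaker: every single transition of $\calN$, firing some $c\vec u_l$, is \emph{itself} a transition of $\cal V$, because each cone is closed under nonnegative scaling; hence $\calN$-runs are $\cal V$-runs and the two reachability relations coincide, which is how the ``iff'' is meant and how the lemma is used downstream. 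Finally, your closing complexity claim is wrong: the number of extreme rays of $\sol{\Qplus}{\setU_i}$ can be exponential in the size of $\setU_i$ --- the paper explicitly remarks that homogeneous linear PNs are potentially exponentially more succinct than continuous PNs --- but this is harmless, as the lemma serves only to transfer Lemma~12 of the continuous-PN analysis, not as a polynomial-time construction step.
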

\noindent
Thus, knowing that homogeneous linear PN subsume continuous PN, the former are 
potentially (exponentially) more succinct representations of the latter.
% and we can not directly use results for 
%them to provide algorithm for linear PNs, that is why we need a detailed analysis 
%of results presented in \cite{serge-compl}.
%We recall some of them.

We need to recall a crucial observation on continuous PN, made in Lemma 12 in~\cite{serge-compl}, 
namely that whenever the initial and final configuration have positive values on all coordinates used
by the transitions in a run, the reachability reduces to $\Qplus$-solvability of state equation.  
Using \cref{lem:conEqlin}, we translate this observation to homogeneous linear 
PN.
Configurations below are understood to be elements of $(\Qplus)^d$, where $d$ is 
the dimension.
Recall that a homogeneous linear PN is determined by a finite set of 
homogeneous systems of linear equations $\setU$;
and that its transition is determined by a solution $\vec v \in \sol{\Qplus}{\setU}$ of some of the systems,
namely the transition 
first subtracts
$\vec v^- = \pi_{1\ldots \dimension} (\vec{v})$ from the current configuration,
and then adds $\vec v^+ = \pi_{\dimension+1\ldots 2\dimension} (\vec{v})$ to it.

%\begin{lemma}[\cite{serge-compl} Lemma~12]\label[lemma]{lem:serge12}
%   Let ${\calN} = \{\vec{u}_1\ldots \vec{u}_m\}$ be a 
%continuous PN. A configuration $\vec{f}$ is reachable from $\vec{i}$ in $\calN$ if:
% \begin{itemize}
%  \item $\vec{f}-\vec{i}=\sum_{i=1}^{m} a_i\vec{u}_i$ for some $a_1, \ldots, a_m \in \Qplus$;
%  \item if $a_i>0$ and $j\in {}^\bullet \vec{u_i}$ then $\vec{i}(j)>0$;
%  \item if $a_i>0$ and $j\in  \vec{u_i}^\bullet$ then $\vec{f}(j)>0$.
% \end{itemize}
%\end{lemma}
%
%In terms of linear PNs it may be reformulated as follows.

\begin{lemma}\label[lemma]{lem:serge12lin}
Let $\V = \{\setU_1\ldots \setU_k\}$ be a $d$-dimensional homogeneous linear PN. 
A configuration $\vec{f}$ is reachable from a configuration $\vec{i}$ in $\V$ whenever 
there are some solutions $\vec{u}_i \in \sol{\Qplus}{\setU_i}$, for $i = 1\ldots, k$, satisfying the following conditions:
 \begin{enumerate}
  \item $\vec{f}-\vec{i} = \sum_{i=1}^{k} - \vec u_i^- + \vec u_i^+$;
  \item if $\vec u_i^-(j) > 0$ then $\vec{i}(j)>0$;
  \item if $\vec u_i^+(j)>0$ then $\vec{f}(j)>0$;
 \end{enumerate}
with $i$ ranging over $1\ldots k$ and $j$ over $1\ldots d$.
\end{lemma}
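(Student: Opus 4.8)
The plan is to reduce the reachability claim for a homogeneous linear PN to the corresponding statement about continuous PN, which is already understood via Lemma~12 of~\cite{serge-compl}. By \cref{lem:conEqlin} every homogeneous linear PN $\V$ can be simulated by a continuous PN $\calN$ so that a single step $\vec c \to \vec c'$ in $\V$ becomes a run $\vec c \to^* \vec c'$ in $\calN$. The three conditions in the statement are precisely the continuous-PN reachability certificate transported back through this simulation: condition~1 is the state equation $\vec f - \vec i = \sum_i (-\vec u_i^- + \vec u_i^+)$, while conditions~2 and~3 are the positivity requirements on the coordinates touched by the subtraction part and the addition part of each transition.

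\para{Key steps} First I would fix, for each $i$, the chosen solution $\vec u_i \in \sol{\Qplus}{\setU_i}$ and split it into $\vec u_i^- = \pi_{1\ldots d}(\vec u_i)$ and $\vec u_i^+ = \pi_{d+1\ldots 2d}(\vec u_i)$. Since each $\setU_i$ is homogeneous, its solution set is closed under nonnegative-rational scaling; this lets me fire each transition rule with an arbitrarily small fractional coefficient, which is the whole point of working in the continuous/rational setting. Second, I would order the firings so that every transition that needs positivity of some coordinate $j$ of the \emph{source} is fired early (while $\vec i(j)>0$ guarantees, by condition~2, that enough mass is present), and every transition that produces mass on a coordinate required positive in the \emph{target} is handled near the end (condition~3 guaranteeing $\vec f(j)>0$). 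The intermediate configurations stay nonnegative because mass needed for subtraction is available from the start or accumulated along the way, and any temporarily consumed coordinate is one on which both $\vec i$ and $\vec f$ are positive. Third, firing all the (scaled) rules in this order realises exactly the net effect $\sum_i (-\vec u_i^- + \vec u_i^+) = \vec f - \vec i$ of condition~1, so the run ends in $\vec f$.

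\para{Main obstacle} The delicate point is maintaining nonnegativity of every intermediate configuration along the constructed run. The state equation (condition~1) only guarantees the \emph{net} balance; it says nothing about whether a partial sum dips below zero on some coordinate. This is exactly where the positivity conditions~2 and~3, combined with the fractional-firing freedom afforded by homogeneity, must be exploited: by firing each rule in many tiny increments and interleaving the increments appropriately, one can keep each coordinate nonnegative throughout, spreading consumption and production so that no coordinate is driven negative before the compensating production occurs. I expect the cleanest way to discharge this obstacle is to invoke \cref{lem:conEqlin} and then apply the known continuous-PN result (Lemma~12 of~\cite{serge-compl}) verbatim to $\calN$, rather than re-proving the scheduling argument by hand; the conditions of the present lemma are set up precisely to match the hypotheses of that result once the simulation is in place.
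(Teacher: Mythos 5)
Your proposal is correct and follows essentially the same route as the paper, which likewise obtains the lemma by transferring Lemma~12 of~\cite{serge-compl} to homogeneous linear PNs through the simulation of \cref{lem:conEqlin}, rather than redoing the scheduling argument by hand. The only detail worth making explicit is that when $\vec u_i$ is decomposed as a $\Qplus$-combination of cone generators (the transitions of the continuous PN $\calN$), the positivity hypotheses transfer to each generator used: since all solution entries are nonnegative there is no cancellation, so a generator consuming (resp.\ producing) on coordinate $j$ forces $\vec u_i^-(j)>0$ (resp.\ $\vec u_i^+(j)>0$), and conditions~2 and~3 then yield exactly the hypotheses of the cited lemma.
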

%
%
%\begin{lemma}[\cite{serge-compl} Theorem~19]\label[lemma]{lem:serge20}
% Let ${\cal N} = (\{\vec{u}_1\ldots \vec{u}_m\}, \vec{i}, \vec{f})$ be a 
%continuous PN. $\vec{f}$ is reachable from $\vec{i}$ iff:
% \begin{itemize}
%  \item $\vec{f}-\vec{i}=\sum_{i=1}^{m} a_i\vec{u}_i$ where $a_i\in \Qplus$,
%  \item there is a configuration $\vec{i'}$ reachable from $\vec{i}$ with usage 
%of transitions in $\{\vec{u}_i| i\in\{1\ldots m\}, a_i>0\}$
%  such that if $a_i>0$ and $j\in {}^\bullet \vec{u_i}$ then $\vec{i}'(j)>0$, 
%  \item there is a configuration $\vec{f'}$ from which $\vec{f}$ can be reached 
%with usage of transitions in $\{\vec{u}_i|$ \linebreak[4]
%$i\in\{1\ldots m\}, 
%a_i>0\}$
%  such that if $a_i>0$ and $j\in  \vec{u_i}^\bullet$ then $\vec{f}'(j)>0$,
% \end{itemize}
%\end{lemma}
%
%\begin{remark}\label{rem:serge12}
%The run from $\vec{i}'$ to $\vec{f}'$ satisfies assumption of the 
%\cref{lem:serge12}.
%Indeed, observe that as a configuration $\vec{i}'$ may differ from $\vec{i}$ 
%by a very small $\epsilon$.
%Thus the run from $\vec{i}$ to $\vec{i}'$ may use an arbitrary small fraction 
%of the 
%transitions $\vec{u}_i$, so it can be subtracted from the $\sum_{i=1}^{m} 
%a_i\vec{u}_i.$
%Similarly, a run from $\vec{f}'$ to $\vec{f}$. The remaining sum is equal 
%$\vec{f'}-\vec{i'}$ thus all requirements of \ref{lem:serge12} are 
%satisfied. 
%\end{remark}
%
%We make the statement weaker and reformulate it in terms of linear PNs.
%
\noindent
Thus \cref{lem:serge12lin} provides a sufficient (but not necessary) condition for $\vec i \longrightarrow^*\vec f$.
We now use the lemma to fully characterise the reachability relation in 
homogeneous linear PN:

\begin{lemma}\label{lem:serge20lin}
Let $\V = \{\setU_1\ldots \setU_k\}$ be $\dimension$-dimensio\-nal homogeneous 
linear PN.
A configuration $\vec{f}$ is reachable from a configuration $\vec{i}$ if, and only if 
for some two configurations $\vec{i'}$ and $\vec{f'}$:
 \begin{enumerate}
  \item $\vec{i'}$ is reachable from $\vec{i}$ in at most $\dimension$ steps;
  %via a run of the length at most $\dimension$,
  \item $\vec{f}$ is reachable from $\vec{f'}$ in at most $\dimension$ steps;
 %via a run of the length at most $\dimension$,
  \item $\vec{i'}$ and $\vec{f'}$ satisfy the sufficient condition of \cref{lem:serge12lin}.
%  \item the reachability from $\vec{i'}$ to $\vec{f'}$ is witnessed by \cref{lem:serge12lin}.
  \end{enumerate}
\end{lemma}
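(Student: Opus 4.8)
The plan is to prove the two implications separately, mirroring the analogous characterisation for continuous PN (Lemma~20 of~\cite{serge-compl}) but working directly inside homogeneous linear PN and invoking \cref{lem:serge12lin} for the middle segment. The $\Leftarrow$ implication is immediate: given $\vec{i'}$ and $\vec{f'}$ as in the statement, condition~3 and \cref{lem:serge12lin} give $\vec{i'}\to^*\vec{f'}$, and composing with $\vec i\to^*\vec{i'}$ (condition~1) and $\vec{f'}\to^*\vec f$ (condition~2) yields $\vec i\to^*\vec f$ by transitivity of $\to^*$.

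For the $\Rightarrow$ implication I would fix a run $\vec i\to^*\vec f$, let $\vec v^{(1)},\vec v^{(2)},\ldots$ be the solutions of the rules $\setU_{i_1},\setU_{i_2},\ldots$ fired along it, and group them per rule into $\vec u_i=\sum_{t:\, i_t=i}\vec v^{(t)}\in\sol{\Qplus}{\setU_i}$, so that $\vec f-\vec i=\sum_{i=1}^k(\vec u_i^+-\vec u_i^-)$. The core is a support-increase claim: from $\vec i$, firing only rules occurring in the run, one reaches in at most $\dimension$ steps a configuration $\vec{i'}$ whose support contains every input coordinate used along the run, i.e.\ every $j$ with $\vec u_i^-(j)>0$ for some $i$. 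This is a closure argument on the $\dimension$ coordinates: starting from $\support{\vec i}$ one repeatedly fires (a tiny positive multiple $\epsilon_t\vec v^{(t)}$ of) some original firing whose input coordinates are already positive, which turns its output coordinates positive; homogeneity makes the solution sets cones, so arbitrarily small such firings are legal, and each firing that is kept enlarges the support by at least one coordinate, whence at most $\dimension$ of them are needed. Applying the same argument to the reversed net (which swaps $\vec v^-$ and $\vec v^+$ and is again a homogeneous linear PN) produces, backwards from $\vec f$ and in at most $\dimension$ steps, a configuration $\vec{f'}$ whose support contains every output coordinate used along the run.

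It remains to check condition~3, namely that $\vec{i'}$ and $\vec{f'}$ satisfy the sufficient condition of \cref{lem:serge12lin}. Writing the two short runs as displacements $\vec{i'}-\vec i=\sum_i(\vec a_i^+-\vec a_i^-)$ and $\vec f-\vec{f'}=\sum_i(\vec b_i^+-\vec b_i^-)$, linearity of $\vec u\mapsto\vec u^+-\vec u^-$ gives $\vec{f'}-\vec{i'}=\sum_i\big((\vec u_i-\vec a_i-\vec b_i)^+-(\vec u_i-\vec a_i-\vec b_i)^-\big)$, so the witnessing solution for rule $i$ is $\vec w_i:=\vec u_i-\vec a_i-\vec b_i$. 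Since the boundary firings were chosen as scaled sub-sums of the original firings with coefficients in $(0,1]$, each $\vec a_i$ and $\vec b_i$ is bounded componentwise so that $\vec w_i$ stays a \emph{nonnegative} solution of $\setU_i$ with $\support{\vec w_i}\subseteq\support{\vec u_i}$. Consequently every input coordinate of $\vec w_i$ lies among those made positive in $\vec{i'}$, and every output coordinate of $\vec w_i$ among those made positive in $\vec{f'}$; these are exactly conditions~2 and~3 of \cref{lem:serge12lin}, and condition~1 is the displacement equation just derived.

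The main obstacle I anticipate is making the support-increase claim fully rigorous, in particular the bound of $\dimension$ steps (rather than the length of the original run) and the nonnegativity of the residual $\vec w_i$. One must schedule the arbitrarily small forward firings so that no already-positive coordinate is driven back to $0$ --- which is possible because only finitely many, arbitrarily small, firings are performed --- and argue that the forward closure genuinely captures every input coordinate fired along the run, using that each such rule is enabled at some point of the original run. This combinatorial fireability statement, together with the cone structure granted by homogeneity that keeps $\vec w_i$ nonnegative, is precisely the adaptation of the Fraca--Haddad argument (available in the continuous case via \cref{lem:conEqlin}) to our more succinct model.
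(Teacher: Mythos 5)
Your proposal is correct and takes essentially the same route as the paper's own proof: both aggregate the firings of a fixed run into per-rule solutions $\vec u_i \in \sol{\Qplus}{\setU_i}$, reach $\vec{i'}$ (and, reasoning backwards, $\vec{f'}$) by at most $\dimension$ arbitrarily small support-enlarging firings, and take the residuals $\vec u_i$ minus the small boundary firings as the witnessing solutions for \cref{lem:serge12lin}. Your explicit closure induction showing that the saturated support contains every input (resp.\ output) coordinate used along the run merely spells out what the paper's remark about reaching the maximal number of nonzero coordinates in at most $\dimension$ steps leaves implicit, so it adds detail rather than a different idea.
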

\begin{proof}
The if direction is immediate. For the only if direction, 
%
%we recall that linear PNs are succinct representations of continuous PNs 
%(\cref{lem:conEqlin}) thus we may 
%apply \cref{lem:serge20}. Due to \cref{rem:serge12} we see that the third point holds.
%
%\noindent
%We prove the bound on the length of the run from $\vec{i}$ to
%$\vec{i}'$ (likewise one obtains the bound for the run $\vec{f}'$ to $\vec{f}$).
%
consider a fixed run from $\vec i$ to $\vec f$, i.e., a sequece of $\Qplus$-solutions of systems
$\setU_1$, \ldots, $\setU_k$. For every $i = 1,\ldots, m$, let $\vec u_i$ be the sum of the multiset of $\Qplus$-solutions of $\setU_i$ 
that take part in the run; clearly $\vec{u}_i \in \sol{\Qplus}{\setU_i}$ as the solution set is additive.

As the only requirement we demand for $\vec{i}'$ (and $\vec f'$) is its positivity on certain coordinates, we modify the run by
requiring that  
%from $\vec{i}$ to $\vec{i}'$ given by \cref{lem:serge20} in $\dimension$ 
in its few first steps it executes transitions with shrinking quantities 
guarantying that the number of coordinates 
equal $0$ in the intermediate configurations is monotonically non-increasing. 
Furthermore, we may also require that every of the few first steps increases the number of 
nonzero coordinates. 
Thus, after at most $d$ steps a configuration $\vec i'$ is reached that achieves the maximum number of 
nonzero coordinates.
Likewise, reasoning backward, we obtain a configuration $\vec f'$ with the same property.
Thus $\vec i'$ and $\vec f'$ satisfy the conditions 2.~and 3.~of \cref{lem:serge12lin}.
Moreover, we may assume that the sum of solutions $\vec v_i  \in \sol{\Qplus}{\setU_i}$ that take part in the runs
$\vec i \longrightarrow^* \vec i'$ and $\vec f' \longrightarrow^* \vec f$
satisfy $\vec v_i \leq \vec u_i$, for $i = 1, \ldots, k$, as arbitrary small quantities are sufficient for positiveness of $\vec i'$ and $\vec f'$. 
The inequalities allow us to derive condition 1.~of \cref{lem:serge12lin}, as
$\vec u_i - \vec v_i \in \Qplus$ for $i =1, \ldots, k$.
Indeed, the run $\vec i  \longrightarrow^* \vec f$ implies the first equality below,
and the two runs 
$\vec i \longrightarrow^* \vec i'$ and $\vec f' \longrightarrow^* \vec f$ imply the second one:
$$\vec{f}-\vec{i} = \sum_{i=1}^{k} - \vec u_i^- + \vec u_i^+ \qquad\qquad
(\vec{i}'-\vec{i}) + (\vec f - \vec f') = \sum_{i=1}^{k} - \vec v_i^- + \vec v_i^+.$$ 
Subtraction of the two equalities yields:
$$\vec{f}-\vec{i} = \sum_{i=1}^{k} - (\vec u_i - \vec v_i)^- + (\vec u_i - \vec v_i)^+,$$ 
as required in condition 1.~of \cref{lem:serge12lin}.
%
%by the inequalities we know that $\vec u_i - \vec v_i \in \Qplus$ for $i =1, \ldots, k$,
%moreover  are 
%using only transitions $\vec{u_i}$ from the sum but multiplied by very small 
%fractions, so they can be subtracted from the equation providing the equation 
%required by \cref{lem:serge12lin} i.e. $\vec{f'}-\vec{i'} =\sum_{i=1}^{k}  
%\frac{a_i}{b_i}(-\vec u_i^- + \vec u_i^+)$ for some positive fractions 
%$\frac{a_i}{b_i}$.
\end{proof}
%
%This almost allows us to solve reachability in linear PNs in \NP. Indeed, guess 
%initial and final short subruns
%and check if $\vec{i}'$ and $\vec{f}'$ satisfies \cref{lem:serge12lin}; the only concern may be the size of fractions needed 
%to describe the two runs. We will not address this question now as our aim is a 
%\PTIME\ procedure anyway, so as a next step we 
%introduce a small extension of the linear programming.

Now we are prepared to complete the proof of Lemma~\ref{lem:LPN}.
The \semieq $\S$ is constructed relying directly on the characterisation of \cref{lem:serge20lin}.
Linear equations are used to express the two runs of length bounded by $d$, 
as well as the condition 1.~of \cref{lem:serge12lin} (which appears in condition 3.~in \cref{lem:serge20lin}); and 
implications are used to express conditions 2.~and 3.~of \cref{lem:serge12lin}. 
The size of $\S$ is clearly polynomial in $\V$.

\subsection{%\ptime decision procedure for 
$\X \in \{\Z, \Q\}$}  \label{sec:pp}

In this, and only in this section we generalise slightly our setting and
consider a fixed commutative ring $\R$, instead of just the ring $\Q$ or rationals.
Accordingly, by a data vector we mean in this section a function
$\D \to \R^d$ from data values to $d$-tuples of elements of $\R$ that map almost all
data values to the vector of zero elements $\zerovec\in\R^d$.
With this more general notion of data vectors, we define
\xpproducts{$\R$} and the \xpproductprob{$\R$} analogously as in Section~\ref{sec:datavectors}.
Furthermore, we define analogously $\R$-sums and consider linear equations with coefficients from $\R$ and their $\R$-solvability problem.

\begin{theorem}\label[theorem] {thm:pp}
For any commutative ring $\R$, the \xpproductprob{$\R$} reduces polynomially
to the $\R$-solvability problem of linear equations.
% with coefficients from $\R$.
\end{theorem}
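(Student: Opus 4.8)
The plan is to transport the data-elimination and histogram machinery of \cref{sec:hist} to the ring setting, where---crucially---the combinatorial histogram conditions collapse to \emph{linear} conditions over $\R$, so that the whole problem becomes a single finite system of $\R$-linear equations. First I would rerun the reduction of \cref{lem:equiv}: since its proof only matches supports with columns and never inspects the coefficient domain, it yields verbatim that $\target$ is an \xpproduct{$\R$} of $\setV$ iff some $N\in\zeroext{M_\target}$ equals an $\R$-linear combination $N=\sum_i c_i M_i$ with $M_i\in\zeroext{\setM}$ and $c_i\in\R$. Applying \cref{lem:simplehistsum} to write each $M_i=M_{\vec v_i}\cdot S_i$ with $S_i$ a simple histogram and grouping by $\vec v\in\setV$ gives
\[
N \ = \ \sum_{\vec v\in\setV} M_{\vec v}\cdot H_{\vec v}, \qquad H_{\vec v} \ = \ \sum_{i:\,\vec v_i=\vec v} c_i\, S_i,
\]
so each $H_{\vec v}$ ranges exactly over the $\R$-module \emph{spanned by simple histograms} of the appropriate dimensions. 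This bypasses \cref{lem:histogram} entirely (whose ``degree $s$'' decomposition makes no sense for $s\in\R$): I take the $\R$-span itself as the definition of an $\R$-histogram, mirroring \cref{lem:histsum}.

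Two linear-algebraic ingredients then finitise the problem. The first is a characterisation of the $\R$-span of simple histograms: an $r\times c$ matrix $H$ lies in it iff all its row sums are equal and $H(i,j)=0$ whenever $j<i$ or $j>c-r+i$ (the two ``corner triangles'' already forced in \cref{def:hist}). The inclusion $\subseteq$ is immediate; for $\supseteq$ I would exhibit an integer generating family---one fixed simple histogram together with $\pm1$ differences of simple histograms realising each band entry at fixed row sums---and check that its $\Z$-span is \emph{saturated}, so that spanning over $\Z$ upgrades to spanning over an arbitrary commutative ring $\R$. This produces a polynomial-size, polynomial-time-computable system of $\R$-linear equations cutting out the admissible $H_{\vec v}$. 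The second ingredient handles the membership $N\in\zeroext{M_\target}$, i.e.\ the language condition~\eqref{eq:lang}: I would fix a \emph{canonical column layout} of width $c$ in which the $n$ columns of $\targetMatr$ sit at prescribed indices separated by gap blocks, and then impose, per column, either the nonhomogeneous equation $\sum_{\vec v}M_{\vec v}H_{\vec v}(\_,j)=\targetMatr(\_,i)$ at a target index or the homogeneous equation $\sum_{\vec v}M_{\vec v}H_{\vec v}(\_,j)=\zerovec$ at a gap index. All constraints are now $\R$-linear in the entries of the $H_{\vec v}$, so the instance is an $\R$-solvability instance.

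The main obstacle is bounding the layout width $c$---equivalently the gap sizes---by a polynomial, which is exactly where the ring (as opposed to $\N$) structure is decisive: in \cref{sec:upperbound} unbounded gaps forced a detour through VAS reachability, whereas here cancellation must let us compress them. I expect a pumping/merging argument: given any solution, two adjacent gap columns not jointly occupied by a common simple-histogram summand of any $H_{\vec v}$ can be merged while preserving every histogram constraint (the profile identity~\eqref{eq:inv} shows the merge is local), and ring coefficients---permitting the $\pm$ cancellations illustrated by the ``swap'' phenomenon---let us dissolve the remaining gap columns down to $O(\sum_{\vec v}c_{\vec v})$ per gap. This yields $c\le n + (n+1)\cdot O(\sum_{\vec v}c_{\vec v})$, polynomial in the input, after which the assembled $\R$-linear system has polynomially many variables and equations and is equivalent to the original \xpproductprob{$\R$}, completing the reduction. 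The delicate points I would watch are the saturation check in the span characterisation (so the argument survives passage from $\Z$ to an arbitrary $\R$) and making the merging step respect the order constraints coming from vectors whose support straddles a gap.
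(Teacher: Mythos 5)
Your proposal diverges from the paper's route and, as it stands, has a genuine gap at its load-bearing step. The parts you get right: \cref{lem:equiv} and \cref{lem:simplehistsum} do transfer verbatim to ring coefficients, and your span characterisation is correct --- an $\matr{r}{c}$-matrix lies in the $\R$-span of simple histograms iff it has equal row sums and support in the band $i\leq j\leq c-r+i$ (the elementary matrices with $+1$ at $(i,j)$, $-1$ at $(i,j{+}1)$ are differences of two simple histograms and generate the zero-row-sum band lattice over $\Z$, and a unimodular parametrisation transfers this to arbitrary $\R$, so your saturation worry is resolvable). The gap is step 4, the polynomial width bound, which you only conjecture (``I expect a pumping/merging argument''). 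Your merging criterion is not even well-defined: an element of the span has no canonical decomposition into simple-histogram summands, so ``not jointly occupied by a common summand'' depends on a choice you cannot control. Worse, the underlying compression step is unsound without substantial rewriting of decompositions: zero columns can be load-bearing. For $r=2$, $c=3$, take
\[
H \ = \ E_{13}+E_{23}-E_{12} \ = \
\begin{bmatrix}
0 & 1 & 0 \\
0 & -1 & 2
\end{bmatrix},
\]
where $E_{pq}$ is the simple histogram with ones at $(1,p),(2,q)$. Column $1$ of $H$ is zero, yet deleting it yields $\begin{bmatrix} 1 & 0 \\ -1 & 2 \end{bmatrix}$, which violates the band condition and is not in the span of simple $\matr{2}{2}$-histograms; indeed \emph{every} decomposition of $H$ uses summands touching column $1$. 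So in a long multihistogram all gap columns may be ``occupied'' in every decomposition, your merge never fires, and closing the argument requires exactly the cancellation identities you wave at --- which is the idea you are missing.

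That missing idea is the paper's entire proof. The paper never returns to matrices or histograms for \Cref{thm:pp}: it proves \Cref{lem:THM3}, characterising $\target$ as an \xpproduct{$\R$} of $\setV$ by two order-free conditions --- $\undata{\target}$ is an $\R$-sum of $\undata{\setV}$, and every column $\vec a\in\compon{\target}$ is an $\R$-sum of $\compon{\setV}$. The engine is the \emph{$\vec a$-move}: since $\R$ has subtraction, $\vec v\circ\rho-\vec v\circ(\theta\circ\rho)$ is an \xpproduct{$\R$} of $\{\vec v\}$ that transfers $\vec a=\vec v(\alpha)$ between two data values (Claim~\ref{claim:move}); moves let one collapse $\undata{\target}$ onto a single datum and then redistribute into the actual columns of $\target$. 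This shows the order on $\setD$ is irrelevant over a ring, and the reduction is just two polynomial-size $\R$-linear systems --- no canonical layout, no width bound, no histogram machinery. Your plan could in principle be completed (a polynomial width bound is in fact true, but the natural proof of it already constructs moves), so you would be rediscovering the paper's argument wrapped in unnecessary combinatorics; as submitted, the proof is incomplete at precisely that point.
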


\noindent
Clearly, Theorem~\ref{thm:pp} implies the remaining cases of Theorem~\ref{thm:p}, namely $\X \in \{\Z, \Q\}$,
as in these cases the $\X$-solvability of linear equations is in \ptime.
\Cref{thm:pp} follows immediately by \Cref{lem:THM3}, stated below, whose proof is strongly inspired by Theorem 15 in~\cite{HLT2017LICS}.
For a data vector $\vec v$, we define the vector $\undata{\vec v}\in \R^d$ and 
a finite set of vectors $\compon{\vec v} \finsubseteq \R^d$:
\begin{align*}
\undata{\vec v} \quad & \eqdef \quad \sum_{\alpha \in \support{\vec v}} \vec v(\alpha) \\
\compon{\vec v} \quad & \eqdef \quad \setof{\vec v(\alpha)}{\alpha\in\support{\vec v}}.
\end{align*}
Clearly both operations commute with data permutations: $\undata{\vec v} = \undata{\vec v \circ \theta}$
and $\compon{\vec v} = \compon{\vec v \circ \theta}$, and can be 
lifted naturally to finite sets of data vectors:
\begin{align*}
\undata{\V} \quad & \eqdef \quad \setof{\undata{\vec v}}{\vec v \in \V} \\
\compon{\V} \quad & \eqdef \quad \bigcup_{\vec v \in \V} \compon{\vec v}.
\end{align*}
%
%\begin{claim}\label{claim:SUM}
%For any data vector $\vec v$, a data vector that maps one data value to $\undata{\vec v}$
%and all other data values to $\zerovec$, 
%is an \xpproduct{$\R$} of $\{\vec v\}.$
%\end{claim}
%%
%%\noindent
%\begin{proof}
%Indeed, it suffices to use $\vec a$-moves to transfer all vectors from $\compon{\vec v}$ to one data value.
%\end{proof}
%
%
%Recall that a $\R$-sum of a finite set $\{\vec a_1, \ldots, \vec a_m\} \finsubseteq \R^d$ is any sum of the following form
%$
%\sum_{i=1}^{m} c_i \cdot \vec a_i,
%$
%for some $c_1, \ldots, c_m \in \R$.
%
% We claim that the \xpproductprob{$\R$} reduces polynomially to the $\R$-solvability of systems of linear equations
% (which yields a \ptime decision procedure):
%
% %
\begin{lemma}\label{lem:THM3} 
Let $\target$ be a data vector and $\setV$ be a finite set of data vectors 
$\setV$. %, as in be an input to the \xpproductprob{$\R$}.
Then $\target$ is an \xpproduct{$\R$} of $\setV$ % = \{\vec v_1, \ldots, \vec v_k\}$ 
if, and only if % $\setU_{M,\vec a}$ has a $\R$-solution.
\begin{enumerate}
 \item $\undata{\target}$ is an $\R$-sum of $\undata{\setV}$, and %, \ldots, \undata{\vec v_k}\} = V$, % and
 \item every $\vec a \in \compon{\target}$ is an $\R$-sum of $\compon{\setV}$.
\end{enumerate}
\end{lemma}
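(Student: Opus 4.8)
The plan is to prove both implications, with the forward direction being routine and the backward direction (a construction) carrying all the difficulty. For the forward direction, suppose $\target = \sum_{i=1}^m c_i\cdot(\vec v_i\circ\rho_i)$ with $\vec v_i\in\setV$ and $c_i\in\R$. Applying $\undata{\cdot}$ and using that it is $\R$-linear, additive, and invariant under data permutations (as already noted in the excerpt), I get $\undata{\target}=\sum_i c_i\,\undata{\vec v_i}$, which is condition~1. For condition~2, I fix any $\beta\in\support{\target}$ and read off the single column $\target(\beta)=\sum_i c_i\,\vec v_i(\rho_i(\beta))$; each nonzero summand $\vec v_i(\rho_i(\beta))$ lies in $\compon{\vec v_i}\subseteq\compon{\setV}$, so $\target(\beta)$ is an $\R$-sum of $\compon{\setV}$, proving condition~2.

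For the backward direction I would first record the easy observation that the set of \xpproducts{$\R$} of $\setV$ is closed under addition and under $\R$-scaling, so it suffices to exhibit $\target$ as a sum of simpler products. Writing $\delta_\beta(\vec a)$ for the data vector equal to $\vec a$ at $\beta$ and $\zerovec$ elsewhere, and letting $\support{\target}=\{\beta_1<\dots<\beta_p\}$ with $\vec t_\ell=\target(\beta_\ell)$, I fix one reference data value $\beta^*$ and use the telescoping identity
\[
\target \;=\; \delta_{\beta^*}\!\big(\undata{\target}\big) \;+\; \sum_{\ell=1}^{p}\big(\delta_{\beta_\ell}(\vec t_\ell)-\delta_{\beta^*}(\vec t_\ell)\big),
\]
which holds because $\sum_\ell\delta_{\beta^*}(\vec t_\ell)=\delta_{\beta^*}(\undata{\target})$. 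This reduces the problem to two gadget lemmas: that $\delta_{\beta^*}(\undata{\target})$ is a product (using condition~1), and that each $\delta_{\beta_\ell}(\vec t_\ell)-\delta_{\beta^*}(\vec t_\ell)$ is a product (using condition~2).

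The crux, and the main obstacle, is building these gadgets, since a data vector is rigid: placing a permuted copy of some $\vec w\in\setV$ deposits \emph{all} of its columns at once, so isolating or relocating a single column is only possible by cancelling the unwanted columns using additive inverses in $\R$ together with the density of the order. Concretely, for the \emph{move gadget} I take $\vec g=\vec w(\gamma_k)$ a column of $\vec w$ and, for $\beta<\beta'$, place $+\vec w$ with its $k$-th column at $\beta$ and $-\vec w$ with its $k$-th column at $\beta'$, choosing all the remaining column positions to coincide between the two copies (density lets both $\beta$ and $\beta'$ sit in the $k$-th order-slot); everything cancels except $\delta_\beta(\vec g)-\delta_{\beta'}(\vec g)$. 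By $\R$-linearity this yields $\delta_\beta(\vec a)-\delta_{\beta'}(\vec a)$ for every $\vec a$ that is an $\R$-sum of $\compon{\setV}$, settling the summands above via condition~2. For the \emph{isolation gadget} I realize $\delta_\beta(\undata{\vec w})$ by taking one plain copy of $\vec w$ and then using move gadgets to drag each of its columns onto the single data value $\beta$; scaling and summing over an $\R$-sum expansion of $\undata{\target}$ then yields $\delta_{\beta^*}(\undata{\target})$ as a product, settling the first term via condition~1. Combining the two gadgets through the displayed identity and closure under sums completes the construction. The delicate points to get right are the order constraints when choosing the auxiliary data values (ensuring the matching copies really agree off one slot) and the bookkeeping that the final sum leaves no residual columns.
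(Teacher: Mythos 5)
Your proposal is correct and takes essentially the same route as the paper's proof: your forward direction is identical, your move gadget is precisely the paper's $\vec a$-move of Claim~\ref{claim:move} (two order-compatible copies of $\vec w$, with opposite signs, agreeing on all columns but one), your isolation gadget is the paper's observation that $\undata{\vec v}$ is an \xpproduct{$\R$} of $\{\vec v\}$ obtained by dragging all columns onto one datum, and your use of $\R$-linearity to pass from $\compon{\setV}$-moves to $\compon{\target}$-moves matches the paper's step via condition~2. The only difference is presentational: your telescoping identity makes explicit the final bookkeeping that the paper summarizes as ``we use the $\vec a$-moves to transform $\undata{\target}$ into $\target$.''
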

\begin{proof}%[Proof of \Cref{lem:THM3}]
The only if direction is immediate: if $\target = z_1 \cdot \vec w_1 + \ldots + z_n \cdot \vec w_n$ for $z_1, \ldots, z_n \in \R$ and
$\vec w_1, \ldots, \vec w_n \in \perm{\setV}$, then clearly % $(z_1, \ldots, z_k)$ is a $\R$-solution of $\setU_{M,\vec a}$.
$\undata{\target} = z_1 \cdot \undata{\vec w_1} + \ldots + z_n \cdot \undata{\vec w_n}$
and hence $\undata{\target}$ is a $\R$-sum of $\undata{\setV}$ (using the fact that $\undata{\_}$ commutes with data permutations).
Also $\target(\alpha)$ is necessarily an $\R$-sum of $\compon{\setV}$ for every $\alpha \in \support{\target}$.

For a vector $\vec a \in \R^d$, we define an \emph{$\vec a$-move} as an arbitrary data vector that 
maps some data value to $\vec a$, some other data value to $- \vec a$, and all other data values to $\zerovec$.
\begin{claim}\label{claim:move}
Every $\vec a$-move, for $\vec a \in \compon{\vec v}$, is an \xpproduct{$\R$} of $\{\vec v\}$.
\end{claim}
%
%\begin{proof}
\noindent
Indeed, for $\vec a = \vec v(\alpha)$, 
consider a data permutation $\theta$ which preserves all elements of $\support{\vec v}$ except that it maps $\alpha$
to a data value $\alpha'$ related in the same way as $\alpha$ by the order $\leq$ to other data values in $\support{\vec v}$.
Then ${\vec a}$-moves are exactly data vectors 
$(\vec v - \vec v \circ \theta) \circ \rho = \vec v \circ \rho - \vec v \circ (\theta \circ \rho)$.
%\end{proof}

For the if direction, suppose point 1. holds: 
$\undata{\target}$ is an $\R$-sum of $\undata{\setV}$.
%$\undata{\target} = z_1 \cdot \undata{\vec v_1} + \ldots + z_k \cdot \undata{\vec v_k}$.
Treat the vector $\undata{\target}$ and the vectors in $\undata{\setV}$ as data 
vectors with the same singleton support.
Observe that $\undata{\vec{v}}$ for any $\vec{v}\in\setV$ is  
an \xpproduct{$\R$} of $\{\vec{v}\}$; indeed we can use $\vec{a}$-moves to 
transfer all nonzero vectors for data in $\support{\vec{v}}$ into one datum.
With this view in mind
%, by Claim~\ref{claim:SUM} 
we have:
\begin{itemize}
 \item $\undata{\target}$ is an \xpproduct{$\R$} of $\setV.$
\end{itemize}
%Indeed, by Claim~\ref{claim:SUM}, for every $\cokol = 1, \ldots, k$, $\undata{\vec v_\cokol}$ is a \xpproduct{$\R$} of $\setV$
%and so does $\undata{\vec x}$.
% 
% \begin{itemize}
% \item for every $\cokol = 1, \ldots, k$, $\undata{\vec v_\cokol}$ is a \xpproduct{$\R$} of $\{\vec v_\cokol\}$ (by Claim~\ref{claim:SUM});
% \item $\undata{\target}$ is a \xpproduct{$\R$} of $V = \{\undata{\vec v_1}, \ldots, \undata{\vec v_k}\}$ (by the assumption);
% \end{itemize}

\noindent
Furthermore, suppose point 2. holds: every $\vec a \in \compon{\target}$ is an $\R$-sum of $\compon{\setV}$.
Thus every $\vec{a}$-move, for $\vec a \in \compon{\target}$,  
is an $\R$-sum of $\setof{\vec{b}\text{-move}}{\vec{b}\in \compon{\setV}}.$
By Claim~\ref{claim:move} we know that every element of the latter set is an \xpproduct{$\R$} of $\setV$. Thus we entail:
\begin{itemize}
\item every $\vec a$-move, for $\vec a \in \compon{\target}$, is an \xpproduct{$\R$} of $\setV$.
%$\vec a$-move is a \xpproduct{$\R$} of $\setV.$
%\item $\target$ is a \xpproduct{$\R$} of $\setV \cup \{\undata{\target}\}$.
\end{itemize}
%Indeed, it suffice to combine two facts:
%\begin{enumerate}
% \item as $\vec{a}$ is a $\R$-sum of $\compon{\setV}$, we have that $\vec{a}$-move is a $\R$-sum
%of $\{\vec{b}\text{-move} |\vec{b}\in \compon{\setV}\},$
% \item due to Claim~\ref{claim:move} we have that 
%each $\vec{b}$-move is \xpproduct{$\R$} of $\setV.$
%\end{enumerate}
%
%
%By composition of the three facts listed above, $\target$ is a \xpproduct{$\R$} of $\setV$ as required.

\noindent
We have shown that $\undata{\target}$, as well as all $\vec a$-moves (for all $\vec a \in \compon{\target}$),
are \xpproducts{$\R$} of $\setV$.
We use the $\vec{a}$-moves to transform $\undata{\target}$ into $\target$, which proves that 
$\target$ is an \xpproduct{$\R$} of $\setV$ as required.
\end{proof}

% !TEX root = main.tex

%\section{Orbit-finite systems of linear equations} \label{sec:orbitfinite}
\section{Concluding remarks} \label{sec:orbitfinite}

The main result of this paper is determining the computational complexity of solving linear equations with integer
(or rational) coefficients, over ordered data.
We observed the huge gap: while the $\N$-solvability problem is equivalent (up to an exponential blowup) to the VAS reachability problem,
the $\Z$-, $\Q$-, and $\Qplus$-solvability problems are all in \ptime.

Except for the last Section~\ref{sec:pp}, we assumed in this paper that the coefficients and solutions come from
the ring $\Q$ of rationals, but clearly one can consider other commutative rings as well.
There is another possible axis of generalisation, which we want to mention now, namely
orbit-finite systems of linear equations over an orbit-finite set of variables.
%
%For simplicity, let both coefficients in equations, as well as solutions, 
%be (possibly negative) reals (but one straigthforwardly adapts to integers, rationals, etc.).
%

Fix a commutative ring $\R$.
Let $\calX, \calY$ be arbitrary, possibly infinite sets.
By an \emph{$\cal X$-vector} we mean any function
$
\calX \to \R
$
which maps almost all elements of $\calX$ to $0\in\R$. 
%The set of all $\cal X$-vectors $\R^\calX$ forms a vector space over the real field $\R$.
An $\matr{\calX}{\calY}$-matrix is an $\calX$-indexed family of (column) $\calY$-vectors,
\[
M \quad = \quad (M_x \in \R^\calY)_{x\in\calX}.
\]
Such a matrix $M$, together with a (column) $\calY$-vector $\vec a$, determines a system of linear equations
$\setU_{M,\vec a}$, whose solutions are those $\calX$-vectors which, treated as coefficients of 
a linear combination of vectors $M_x$, yield $\vec a \in \calY$:
\[
\solsing{\setU_{M,\vec a}} \quad = \quad 
\setof{\vec v\in\R^\calX}{\sum_{x\in\calX} \vec v(x)\cdot M_x = \vec a}.
\]
Note that the sum is well defined as $\vec v(x) \neq 0$ for only finitely many elements $x\in \calX$.
The setting of this paper is nothing but a special case, where $\R = \Q$; and where 
\[
\calX = \perm{\setV} \quad \text{ and } \quad \calY = \D \times \{1, \ldots, d\}
\]
are \emph{orbit-finite} sets,
i.e., sets which are finite up to the natural action of automorphisms of the data domain $(\D, \leq)$:
data vectors are clearly elements of $\Z^{\D \times \{1, \ldots, d\}}$, and solutions we seek for 
are essentially elements of $\R^{\perm{\setV}}$.
The natural action of a monotonic bijection $\theta : \D \to \D$ maps a pair $(d, i)\in\D \times \{1, \ldots, d\}$ 
to $(\theta(d), i)$; and maps a data vector $\vec v\in \setV$ to $\vec v \circ \theta^{-1}$.
Similarly, another special case has been investigated in~\cite{HLT2017LICS}, where finiteness
up to the natural action of automorphisms of the data domain $(\D, =)$ played a similar role.
As another example, in~\cite{locfin} the orbit-finite solvability problem has been investigated (in the framework of CSP)
for the same data domain $(\D, =)$, in the case where $\R$ is a finite % = \Z_2$ is a two-element 
field.

It is an exciting research challenge to fully understand the complexity landscape of orbit-finite systems of linear equations, 
as a function of the choice of data domain.
In this direction, the results of this paper are discouraging: the case of ordered data, 
compared to the case of unordered data investigated in~\cite{HLT2017LICS}, 
requires significantly new techniques and the complexity of the nonnegative integer solvability differs significantly too;
thus it is expectable that different choices of data domain will require different approaches.
Nevertheless, investigation of orbit-finite dimensional linear algebra seems to be a tempting continuation of our work.

%% Bibliography
%\newpage
%\bibliography{conferences,journalsabbr,aautocleaned}
\bibliographystyle{plain}% the recommended bibstyle
%\bibliographystyle{ACM-Reference-Format}
%\bibliography{references}

\appendix

\newpage

% !TEX root = main.tex

\section{Missing proofs}

\medskip

\begin{proof}[Proof of \cref{lem:histogram}]
The if direction is easy as sum of histograms is a histogram, and the degree is the sum of degrees.

We prove the only if direction by induction on the degree $s$ of a histogram. 
For $s=1$ the claim is trivial, so assume $s>1$ and fix a $\matr{r}{c}$-histogram $H$ of degree $s$.
We are going to extract from $H$ a simple histogram $S$ 
in such a way that remaining matrix $H - S$ is still a histogram, necessarily of degree $s-1$. 
Then one can use the induction assumption to deduce that $H$ 
can be decomposed as a sum of simple histograms.

Consider a function a function $f: \set{1\ldots r}\to\set{1\ldots c}$ which maps $i$ to the smallest $j$ with $H(i,j)>0$.
Let $S$ be the simple $\matr{r}{c}$-histogram induces by $f$.
We need to check that the matrix $H-S$ is a histogram.
As the first defining condition of histogram is obvious, we concentrate on the second one, i.e., 
for any $1\leq i<r$ and $0\leq j < c$ we aim at showing 
\[
\sum \big[H-S\big](i, 1 \ldots j) \geq \sum \big[H-S\big](i+1, 1\ldots j+1).
\]
We consider separately three cases:

\begin{itemize}
\item[]{\it (i)} \ $f(i)<f(i+1) \leq j+1$;
\item[]{\it (ii)} \  $f(i)\leq j+1 <f(i+1)$; and
\item[]{\it (iii)} \ $j+1<f(i)<f(i+1)$.
\end{itemize}

\noindent
In the case {\it (i)} we have 
\begin{align*}
\sum \big[H-S\big](i,1\ldots j)  \ = & \ \sum H(i,1\ldots j) - 1 \geq \\ 
\sum H(i+1,1\ldots j+1) - 1 \ = & \ \sum  \big[H-S\big](i+1,1\ldots j+1);
\end{align*}
the inequality holds as $H$ is a histogram.
In the case {\it (ii)} we have
\begin{align*}
\sum \big[H-S\big](i,1\ldots j) \ = & \ 
\sum H(i,1\ldots j) - 1 \geq 0\\
= & \ \sum \big[H-S\big](i+1,1\ldots j+1);
\end{align*}
\noindent
the inequality holds due to definition of the function $x$.
In the case {\it (iii)} we have
\begin{align*}
\sum \big[H-S\big](i,1\ldots j)= 
0 \geq 0
= \sum  \big[H-S\big](i+1,1 \ldots j+1).
\end{align*}
\noindent
Thus $H-S$ is a histogram, which allows us to apply the induction assumption for $s-1.$
\end{proof}

\medskip

\begin{proof}[Proof of Lemma~\ref{lem:simplehistsum}]
Simple $\matr{r}{c}$-histograms are in one-to-one correspondence with monotonic functions $\set{1\ldots r} \to \set{1 \ldots c}$.
Indeed, such a function $f$ induces a simple histogram $S$ with $S(i, j) = 1$ if $f(i) = j$, and $S(i, j) = 0$ otherwise; on the other hand
a simple histogram $S$ defines a function that maps $i$ to the first (and the only) index $j$ with $S(i, j) = 1$, and this function is necessarily monotonic.

Fix a matrix $M$ of row dimension $r$ and of column dimensions $c$.
Consider a simple $\matr{c}{c'}$-histogram $S$ and the corresponding monotonic function $f : \set{1\ldots c} \to \set{1\ldots c'}$.
The multiplication $M \cdot S$ yields a $\matr{r}{c'}$-matrix whose $f(i)$-th column equals the $i$-th column of $M$, for $i = 1\ldots c$, and all other columns
are zero ones. Thus $M\cdot S \in \zeroext{M}$. Moreover, every $\matr{r}{c'}$-matrix $N \in\zeroext{M}$ is obtained in the same way via some 
monotonic function $f : \set{1\ldots c}\to\set{1\ldots c'}$, and thus $N = M \cdot S$ where $S$ is the corresponding simple $\matr{c}{c'}$-histogram.
\end{proof}

\medskip

\begin{proof}[Proof of \Cref{lem:conEqlin}]
The $\Qplus$-solution set of a homogenous system of equations is 
a finitely generated cone  (cf.~\cite{taming}, Prop.3).
Thus every transition in $\cal V$ can be expressed as a $\Qplus$-sum of a set of basic solutions,
and in consequence 
%every transition in $\cal V$ can be represented 
as a sequence of transitions 
of the form $c_i \vec v_i$, for $\vec v_i$ generators of the cone and $c_i \in \Qplus$, 
which are exactly transitions of a continuous PN. % $\cal W$. 

In the opposite direction, every transition of the continuous PN, as described  
above, will be also a transition of the homogeneous PN.
\end{proof}

\end{document}